\newtheorem{definition}{Definition}
\newtheorem{proposition}{Proposition}
\newtheorem{lemma}{Lemma}
\newtheorem{theorem}{Theorem}
\newtheorem*{unntheorem}{Theorem}
\newtheorem*{unncoro}{Corollary}
\newcommand{\vect}{\mathrm{vec}}
\newcommand{\tr}{\mathrm{Tr}}
\newcommand{\MSE}{\mathrm{MSE}}
\newcommand{\plim}{\mathrm{plim}}
\begin{document}

\title{Large multi-response linear regression estimation based on low-rank pre-smoothing}
\author[1]{Xinle Tian} 
\author[2]{Alex Gibberd} 
\author[1]{Matthew A. Nunes\thanks{\noindent Corresponding author: {\tt man54@bath.ac.uk}.\\ Author email addresses: {\tt xt373@bath.ac.uk} (Xinle Tian), {\tt a.gibberd@lancaster.ac.uk} (Alex
Gibberd), {\tt man54@bath.ac.uk} (Matthew A. Nunes), {\tt sr2081@bath.ac.uk} (Sandipan Roy).}} 
\author[1]{Sandipan Roy} 
 \affil[1]{Department of Mathematical Sciences, University of Bath, Bath, UK}
 \affil[2]{School of Mathematical Sciences, Lancaster University, Lancaster, UK}

\date{Published in \textit{Statistical Analysis and Data Mining}, \textbf{19}(2), e70072 (2026)\\
https://doi.org/10.1002/sam.70072}

\maketitle

\begin{abstract} 
Pre-smoothing is a technique aimed at increasing the signal-to-noise ratio in data to improve subsequent estimation and model selection in regression problems. However, pre-smoothing has thus far been limited to the univariate response regression setting. 
However, there are many scientific applications in which interest lies in multi-response regression problems, particularly when the number of responses is large.
Motivated by this setting, this article proposes a technique for data pre-smoothing based on low-rank approximation. We establish theoretical results on the performance of the proposed methodology, which show that in this large-response setting, the proposed technique outperforms ordinary least squares estimation with the mean squared error criterion, whilst being computationally more efficient than alternative approaches such as reduced rank regression. We quantify our estimator's benefit empirically in a number of simulated experiments. We also demonstrate our proposed low-rank pre-smoothing technique on real data arising from the environmental and biological sciences. \\

\bigskip

\noindent {\bf {\em Keywords}}: low-rank approximation; multi-response model; parameter estimation; prediction; pre-smoothing

\end{abstract}

\section{Introduction}

In multivariate data analysis, a common task in analysing multi-response data is to model the relationships between the vector of responses of interest and a number of explanatory variables. Such data are observed in a multitude of scientific applications, including ecology \citep{anderson1997effects}, genetics \citep{sivakumaran2011abundant, liu2014review}, chemometrics \citep{skagerberg1992multivariate, burnham1999latent}, bioinformatics \citep{kim2009multivariate}, population health \citep{bonnini2022relationship}, as well as economics and finance \citep{gospodinov2017spurious, petrella2019joint}.  Whilst component-wise analysis may be an option, analyzing each component separately fails to consider the dependence between the components of the responses; this can result in estimators of model parameters that are less efficient than those that account for potential correlation between responses (see e.g. \cite{srivastava2003predicting, izenman2008modern}), and thus multivariate linear models are often more appropriate. See \cite{anderson1958introduction} for a comprehensive introduction to classical parametric models. Parameter estimation in classical multi-response regression settings is achieved with the ordinary least squares (OLS) estimator, with the Gauss-Markov theorem showing this estimator to be the best linear unbiased estimator (BLUE) \citep[Chapter 7]{johnson07:applied}. 

In many analysis settings, the mean squared error (MSE) criterion is commonly used to measure the quality of a prediction technique, incorporating the well-known bias-variance trade-off in assessing estimators.  One way to achieve a smaller MSE in univariate regression settings is to use so-called \textit{pre-smoothing}, in which response data is smoothed in some way prior to estimation via OLS \citep{faraldo1987efficiency, cristobal1987class, janssen2001efficiency}. The intuition behind this approach is that this first step increases the signal-to-noise ratio in the data, thus reducing the variance contribution in the bias-variance trade-off and consequently reducing parameter uncertainty. Other authors have also noted that pre-smoothing can help with model selection in uni-response regression \citep{aerts2010model}.  The pre-smoothing paradigm has subsequently been used in other modelling settings, such as with functional data \citep{hitchcock2006improved, hitchcock2007effect, ferraty2012presmoothing}, as well as semi-parametric and survival models \citep{musta2022presmoothing, tedesco2025instrumental,  musta2024regression}. Despite these contributions, pre-smoothing has not been explored in the multi-response regression setting.  The work in this article is particularly motivated by settings in which the number of responses is large, a situation commonly found in applications such as genomics \citep{wille2004sparse} and spatiotemporal modelling \citep{liang2015assessing}.

To this end, this article proposes an alternative to the OLS estimator for multi-response regression models using data pre-smoothing, in the spirit of \cite{cristobal1987class, janssen2001efficiency} in the one-dimensional setting.  In what follows, we illustrate this methodology with a low-rank approximation to achieve the pre-smoothing, hence we term our proposed technique \textit{low-rank pre-smoothing} (LRPS). Our proposed low-rank pre-smoothing technique is naturally ``model-free'' in the sense that it does not need additional covariate information such as the nonparametric pre-smoothers proposed in the one-dimensional setting, see also the examples of \citet[Section~2.3]{aerts2010model}.  We investigate how LRPS can be used in the linear regression setting, and demonstrate that it achieves a lower MSE than the OLS estimator and shows particular improvements when the number of responses is large, or the signal-to-noise ratio is small. Our technique is computationally efficient since estimation can be performed directly in the latent space defined by the low-rank approximation. The proposed methodology can be seen as an alternative to the popular reduced rank regression (RRR) technique for such settings, and complements other multi-response regression methods in the literature such as principal component regression (PCR) and partial least squares regression (PLS).

This article is structured as follows. In Section~\ref{sec:LRPS}, we provide a detailed description of our low-rank pre-smoothing estimator. In particular, the theoretical properties of the estimator are explored in Section~\ref{sec:property}. We then present the results of the simulation studies in Section~\ref{sec:Simulation-Study} to examine the effectiveness of the proposed LRPS estimator. We illustrate the practical use of our proposed method in prediction tasks of real data arising in biology and environmental science in Section~\ref{sec:data}.  Finally, we make some concluding remarks in Section \ref{sec:concs}. 

\section{Low-rank Pre-smoothing in multi-response regression}\label{sec:LRPS}

Consider the multivariate multiple outcome linear regression model
\begin{equation}
Y=XB+E\;,\label{eq:mrreg}
\end{equation}
where $Y=(y_{1},\ldots,y_{n})^{\top}$ and $E=(e_{1},\ldots,e_{n})^{\top}$
are $n\times q$ dimensional random matrices representing the outcomes
and errors respectively. We study the fixed design case whereby $X\in\mathbb{R}^{n\times p}$,
and $B\in\mathbb{R}^{p\times q}$ are parameters to be estimated.
We further assume that $e_{i}$ is independent from $e_{j}$ for all
$j\ne i$, where $i,j = 1,2,\dots,n$, so rows are assumed to be independent and identical draws from a distribution $G$ and have constant covariance $\mathrm{Cov}(e_{i})=\Sigma_{e}$. In the case where $e_{i}\overset{iid}{\sim}N_{q}(0,\Sigma_{e})$,
the model (\ref{eq:mrreg}) can be alternatively written as
\begin{equation}
\vect(Y)\sim\mathcal{N}_{nq}\left(\vect(XB),\Sigma_{e}\otimes I_{p}\right).\label{eq:mrregvec}
\end{equation}
When $n>p$ the maximum-likelihood estimator for $B$ under \eqref{eq:mrregvec}
is well defined, and is given by the OLS solution $\hat{B}=S_{X}^{-1}S_{XY}$
where $S_{X}=n^{-1}X^{\top}X$ and $S_{XY}=n^{-1}X^{\top}Y$. This
estimator has the distributional property (see e.g., \citet[Chapter 6]{mardia95}) that
\begin{equation}\label{eq:olsdist}
\vect(\hat{B})\sim\mathcal{N}_{pq}(\vect(B),n^{-1}\Sigma_{e}\otimes S_{X}^{-1})\;,
\end{equation}

and as a consequence, the estimator is consistent and unbiased \citep[Chapter 4]{greene2018econometric}.

Whilst asymptotically the OLS maintains the best unbiased linear estimate
(BLUE) properties, the $pq$ dimensional quantity $\vect(B)$ is still relatively
high-dimensional. As such, performance in finite samples may result
in noisy estimates, with considerable variance. The aim of this paper
is to propose an estimator that enjoys superior finite-sample (and
asymptotic) performance when estimating $B$. Our proposal is straightforward,
and aims to reduce the proportion of the noise, i.e. variance of errors,
that contributes to the regression problem at hand. We now formally define our proposed estimator.

\begin{definition}{The LRPS estimator}

Let $S_{Y}=n^{-1}Y^{\top}Y$ and consider its eigendecomposition
$\hat{V}\hat{D}\hat{V}^{\top}=S_{Y}$. The \emph{Low-Rank Pre-Smoothing
(LRPS) estimator} for $B$ is then given by 
\begin{equation}
\tilde{B}=S_{X}^{-1}S_{XY}\hat{V}_{k}{\hat{V}_k}^{\top}\;,\label{eq:LRPS-1}
\end{equation}
where $\hat{V}=(\hat{V}_{k},\hat{V}_{\backslash k})$, with $\hat{V}_{k}=(\hat{v}_{1},\ldots,\hat{v}_{k})$
being the first $k\le q$ eigenvectors, and $\hat{V}_{\backslash k}$
the remaining $q-k$ eigenvectors. We assume the ordering of $k$ matches that of the
eigenvalues $\hat{D}=\mathrm{diag}(\hat{d}_{1},\ldots,\hat{d}_{q})$
and $\hat{d}_{k}\ge\hat{d}_{k+1}$. 

\end{definition}
The LRPS estimator can be seen as the OLS estimator applied to a ``pre-smoothed''
set of outcomes $\tilde{Y}=Y\hat{W}_{k}$, with $\hat{W}_{k}=\hat{V}_{k}{\hat{V}_k}^{\top}$
acting to project $Y$ onto the subspace spanned by $\hat{V}_{k}$,
and then ${\hat{V}_k}^{\top}$ projecting back into the original dimensions
of $Y$. The estimator is simple to compute, with the truncated eigenvectors
$\hat{V}_{k}$ found in $\mathcal{O}(kq^{2})$ computational time.
Furthermore, via the Eckhardt-Young theorem \citep{jolliffe2016principal}
we have that $\tilde{Y}$ is the best rank $k$ approximation of $Y$
under the least-squares (Frobenius) loss. The method is feasible as
$\hat{V}_{k}$ are the empirical eigenvectors associated with the
observed $Y$, and not on a population, or model-dependent quantity. Our motivation is
somewhat similar to that of \citet{cristobal1987class,janssen2001efficiency}
who propose non-parametric pre-smoothing of an outcome before performing
the regression. However, in our case, we have a multi-outcome (multivariate)
setting, and the smoothing is performed via a specific device, that
is the projection onto the most influential eigenvectors of the outcome.  We note here that other suitable multivariate non-parametric data smoothers could be used for pre-smoothing, however, they would need other (covariate) information as inputs to the smoother, and potentially suffer from the curse of dimensionality in high-dimensional multi-response data regimes.

At this point, we should note that a similar methodology can be employed
within the class of methods known as reduced rank-regression \citep{izenman1975reduced}.
We elaborate on the similarities and differences between LRPS and
these methods later in Section \ref{sec:relationtorrr}, however as the name suggests, reduced rank-regression
aims to find a low-rank approximation of $B$, whereas LRPS makes
no such explicit assumption, and indeed, we can set $k\ge\min(n,p)$
in LRPS to enable $\tilde{B}$ to remain full rank. A further
difference between the methods is that LRPS doesn't require knowledge
of $X$ prior to the choice of eigenvectors, as $V_{k}$ is obtained
directly from $Y$. However, the optimal choice of $k$ will in general
depend on the shape of the spectrum (eigenvalues) of $XB$, $E$,
and the dimensionality $p,n,q$. This aspect of our proposed method is explored further in Section \ref{sec:Simulation-Study}.

\subsection{Theoretical properties of the LRPS estimator}

\label{sec:property}

In this section, we discuss the asymptotic distribution for the LRPS
estimator and the potential efficiency gains relative to OLS. Our
analysis is in the standard dimensional setting where $p$ and $q$ are assumed
fixed with $p < n$, and we let $n\rightarrow\infty$. To study the estimator in
this setting, we consider the population covariance $\Sigma_{y}=n^{-1}(XB)^{\top}(XB)+\Sigma_{e}=B^{\top}S_{X}B+\Sigma_{e}$, its eigendecomposition $\Sigma_{y}=VDV^{\top}$, and the smoothing operator $W_{k}=V_{k}{V_k}^{\top}$ based on the
first $k$ eigenvectors. 

\begin{theorem}{Asymptotic distribution of LRPS}

Under model (\ref{eq:mrreg}), suppose the moments of the errors $\{e_{1},\ldots,e_{n}\}$
exist (and are finite) up to the fourth order. Assuming that $\lim_{n\rightarrow\infty}S_{X}=\Sigma_{X}$,
then we have
\begin{equation}
n^{1/2}\vect\left(\tilde{B}-BW_{k}\right)\stackrel{D}{\rightarrow}\mathcal{N}_{pq}\left(\mathbf{0},W_{k}\Sigma_{e}W_{k}\otimes\Sigma_{X}^{-1}\right)\ \textrm{as $n\rightarrow\infty$}.\label{eq:asydist}
\end{equation}
\label{theorem:lrpsdist}

\end{theorem}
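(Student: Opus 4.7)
The plan is to write $\tilde B = \hat B \hat W_k$ and reduce the theorem to two classical ingredients --- asymptotic normality of the OLS estimator $\hat B$ and consistency of the spectral projector $\hat W_k$ --- glued together by Slutsky's theorem.

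First I would establish $\hat W_k \xrightarrow{P} W_k$. Decomposing $S_Y = B^\top S_X B + n^{-1}(B^\top X^\top E + E^\top X B) + n^{-1}E^\top E$, the signal term converges deterministically to $B^\top \Sigma_X B$ by the assumption $S_X\to\Sigma_X$, the cross terms are $O_p(n^{-1/2})$ by a CLT on $n^{-1/2}X^\top E$, and $n^{-1}E^\top E \xrightarrow{P} \Sigma_e$ by the weak LLN applied to the i.i.d.\ rank-one matrices $e_i e_i^\top$ (the fourth-moment hypothesis is more than sufficient). Under the implicit eigengap $d_k>d_{k+1}$, continuity of the top-$k$ eigenprojector combined with the continuous mapping theorem then gives $\hat W_k \xrightarrow{P} W_k$.

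Next, a standard triangular-array CLT applied to $n^{-1/2}\sum_{i=1}^n x_i e_i^\top$ yields $n^{1/2}\vect(\hat B - B)\xrightarrow{D}\mathcal{N}_{pq}(0,\Sigma_e\otimes\Sigma_X^{-1})$. Writing
\begin{equation*}
\tilde B - B W_k = (\hat B - B)\hat W_k + B(\hat W_k - W_k),
\end{equation*}
the first piece vectorises as $(\hat W_k \otimes I_p)\vect(\hat B - B)$; combining Slutsky with $\hat W_k\xrightarrow{P} W_k$ and the Kronecker identity $(W_k\otimes I_p)(\Sigma_e\otimes\Sigma_X^{-1})(W_k\otimes I_p)=W_k\Sigma_e W_k\otimes\Sigma_X^{-1}$ produces the target limit law for this piece.

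The main obstacle is the residual $n^{1/2} B(\hat W_k - W_k)$. First-order eigenprojector perturbation gives
\begin{equation*}
\hat W_k - W_k = \sum_{i\le k<j}\frac{V_i^\top(S_Y-\Sigma_y)V_j}{d_i-d_j}\bigl(V_iV_j^\top + V_jV_i^\top\bigr)+o_p(n^{-1/2}),
\end{equation*}
which is $O_p(n^{-1/2})$, so after scaling the residual is a priori $O_p(1)$ rather than $o_p(1)$. Closing the proof therefore requires either substituting the decomposition of $S_Y-\Sigma_y$ into the above and exhibiting algebraic cancellation against the sampling variability of $\hat B$, or invoking an implicit alignment condition such as $\mathrm{range}(B^\top)\subseteq \mathrm{span}(V_k)$, which forces $BV_j=0$ for $j>k$ and kills the perturbation outright. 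The bulk of the proof will live in this step; the preceding Slutsky argument is routine once the residual is shown to be asymptotically negligible.
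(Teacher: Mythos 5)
Your proposal follows essentially the same route as the paper's proof: establish consistency of the spectral projector $\hat{W}_k$ from the convergence of $S_Y$ (the paper does this via a CLT for $\sqrt{n}\,\mathrm{vec}(S_Y-\Sigma_y)$ in Lemma~\ref{lemma1} and the Bura--Pfeiffer results on singular vectors of root-$n$-consistent random matrices, which give $\hat{V}_k\hat{V}_k^{\top}\stackrel{\mathbb{P}}{\rightarrow}V_kV_k^{\top}$), combine this with asymptotic normality of $\mathrm{vec}(\hat{B})$, and finish with Slutsky and the Kronecker mixed-product identity. The decompositions differ only in grouping: the paper writes $\tilde{B}-BW_k=\hat{B}(\hat{W}_k-W_k)+(\hat{B}-B)W_k$ where you write $(\hat{B}-B)\hat{W}_k+B(\hat{W}_k-W_k)$; these are interchangeable, and the Slutsky/Kronecker computation of the limit covariance $W_k\Sigma_eW_k\otimes\Sigma_X^{-1}$ for the ``good'' piece is identical in both.

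The substantive point is the residual term you flag, and you are right to flag it. The Bura--Pfeiffer corollary used in the paper gives $\hat{V}_k-V_k=O_p(n^{-1/2})$ with a generically nondegenerate limit, so $n^{1/2}B(\hat{W}_k-W_k)$ (equivalently $n^{1/2}\hat{B}(\hat{W}_k-W_k)$ in the paper's grouping) is a priori $O_p(1)$, not $o_p(1)$, and Slutsky alone does not dispose of it. The paper's proof passes over this: the final displayed convergence simply asserts the limit and computes the variance of the $(\hat{B}-B)W_k$ piece only, which implicitly requires $\hat{W}_k-W_k=o_p(n^{-1/2})$ --- in tension with the $O_p(n^{-1/2})$ rate the paper itself establishes. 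So your sketch is no less complete than the paper's argument and is more candid about where the difficulty sits. Closing that step genuinely requires one of the options you list: substituting the perturbation expansion of $\hat{W}_k-W_k$ and showing the off-block terms $V_i^{\top}(S_Y-\Sigma_y)V_j$ ($i\le k<j$) contribute nothing to the limit (they do not cancel in general), imposing an alignment condition such as $BV_{\backslash k}=0$, or recentring the theorem at $B\hat{W}_k$. As written, both your proposal and the paper leave this term unaccounted for; the difference is that you say so.
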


The proof of the above can be found in Appendix \ref{sec:appendix}, and relies on
the consistency of the sample eigenvectors $\plim_{n\rightarrow\infty}\hat{V}_{k}=V_{k}$ 
\citep{neudecker1990asymptotic,bura2008distribution}, where $\plim$ denotes as convergence in probability; without loss of generality, we also assume that the design matrix $X$ (and data matrix $Y$) have been mean-centered. The result
shows $\tilde{B}$ is a root-$n$ consistent estimator of the corresponding
low-rank projection of $B$, $BW_{k}$; we note here that whilst motivated via the OLS estimator under the Gaussian assumption in \eqref{eq:mrregvec}, the distributional result of Theorem \ref{theorem:lrpsdist} holds with only mild moment assumptions on the error distribution $G$ (see Appendix \ref{sec:appendix}). Furthermore, this illustrates
that whilst the LRPS estimator may be a biased estimator of the coefficient
matrix $B$ (for $k<q$), it enjoys better mean squared error (MSE)
performance than the OLS in some important scenarios. 

Let 
$\MSE(\tilde{B}):=\mathbb{E}_{E}(\|\tilde{B}-B\|_{F}^{2})$
where $\| \cdot \|$ denotes as Frobenius norm. We take expectations over the error matrix $E$ with $n$ rows and $q$ columns,
e.g. as in the model (\ref{eq:mrreg}). For a large $n$, if the asymptotic
distribution of Theorem 1 holds, the MSE of LRPS can be approximated
as 
\begin{equation}
\MSE(\tilde{B})\approx\underset{\mathrm{Var}(\tilde{B})}{\underbrace{\frac{\bar{\alpha}_{X}}{n}\tr(\hat{W}_{k}\Sigma_{e})}}+\underset{\mathrm{Bias}^{2}(\tilde{B})}{\underbrace{\tr\{(I_{q}-\hat{W}_{k})B^{\top}B\}}},\label{eq:lrpsmse}
\end{equation}
\label{coro:mse} where $\bar{\alpha}_{X}:=\tr(S_{X}^{-1})$. The
first term represents the variance of the estimator, and the second,
the square of the bias\footnote{Note that our approximate statement here misuses the asymptotic properties
in the non-asymptotic setting. A more precise statement could potentially
be obtained via application of Berry-Esseen  type limit theorems on
the sample eigenvectors, however, this requires further technical
assumptions and we prefer to leave finite-sample inference for future
work.}.

Note that the form of the mean squared error expression in (\ref{eq:lrpsmse})
is, as expected, dependent on the chosen number of low rank components,
$k$. This recovers the MSE for the OLS when $k=q$, in which case
the bias disappears as $\hat{W}_{q}=I_{q}$. The variance reduction
relative to OLS is given by
\[
\frac{\mathrm{Var}(\tilde{B})}{\mathrm{Var}(\hat{B})}\approx\frac{\|\Sigma_{e}^{1/2}\hat{V}_{k}\|_{F}^{2}}{\|\Sigma_{e}^{1/2}\|_{F}^{2}}\le\frac{\sum_{i=1}^{k}\gamma_{i}^{2}(\Sigma_{e}^{1/2})}{\sum_{i=1}^{q}\gamma_{i}^{2}(\Sigma_{e}^{1/2})}\;,
\]
where $ \gamma_i(\Sigma_e^{1/2}) $ denotes the $ i $-th largest singular value of $\Sigma_e^{1/2}$. Thus if the spectrum of the errors, which refers to the spread of singular values of the error covariance matrix, is relatively flat, then the reduction
can be of order $k/q$, a substantial benefit when $q$ is large. On the other hand, if the spectrum of the errors is spiked, e.g. if
$\Sigma_{e}$ were geometrically decaying, then the variance reduction
benefits of LRPS may be minimal. In terms of bias, we see this is
related to the proportion of the signal (as expressed via $B$ through
$X$) projected into the orthocomplement
 of the space spanned by the first
$k$ singular vectors of $Y$. 
 The eigenvectors $\hat{V}_{k}$ will
in general not be optimal for representing the signal $XB$, however,
are chosen based on a combination of this signal, and the error $EV_{k}$. 

\subsection{Relation to reduced-rank-regression}\label{sec:relationtorrr}

In the case when $k\le\min(p,n)$ it is of interest to understand
the performance of LRPS in terms of estimation bias, in particular,
in comparison to reduced-rank-regression (RRR) methods \citep{izenman1975reduced,Reinsel1998}
which aims to solve the problem
\begin{equation*}
\bar{B} \in  \arg \min_{B\in\mathbb{R}^{p\times q}}\|Y-XB\|_{F}^{2}
 \qquad \mathrm{s.t.}\;r(B)\le k\;,
\end{equation*}

for a given $k\le\min(p,q)$, where $r(V)$ denotes the rank of $V$. A classical approach to this problem
is to minimise $B$ after projecting onto the space spanned by the
rank-$k$ approximation of $\hat{Y}=HY$ where $H=n^{-1}XS_{X}^{-1}X^{\top}$. 

\begin{definition}{Reduced-rank regression}\label{dfn:rrr}

Consider the decomposition:
\[
U_{k}G_{k}{U_{k}}^{\top}+U_{\backslash k}G_{\backslash k}U_{\backslash k}^{\top}=\frac{1}{n}Y^{\top}HHY=\hat{B}^{\top}S_{X}\hat{B}\;,
\]
where $U_{k},G_{k}$ are matrices representing the first $k$ eigenvectors
and eigenvalues of the decomposition respectively. The \emph{Reduced-Rank Regression (RRR)}
estimator is then defined as
\begin{equation}
\bar{B}=S_{X}^{-1}S_{XY}U_{k}{U_{k}}^{\top}\;.\label{eq:rrr_reinsel}
\end{equation}
\end{definition}

The form of the estimator follows from noting that when $n>p$ we
have 
\[
\bar{B}=\arg\min_{r(B)\le k_{*}}\{\|Y-X\hat{B}\|_{F}^{2}+\|XB-X\hat{B}\|_{F}^{2}\}\;.
\]
Applying Theorem 2.1 in \citet{Reinsel1998} to optimize over the second term in the equation above leads to (\ref{eq:rrr_reinsel}). We thus see that LRPS and RRR are
closely related, the difference being the choice of basis used
in the projection step. When $n>p$, then RRR represents the optimal
solution in the Gaussian error setting (where it attains the MLE and
thus BLUE property), {\textit when the low-rank $k_{*}=r(B)\le\min(q,p)$
assumption holds}. Importantly, our proposed LRPS method can also be applied when $k\ge p$,
whereas RRR falls back to the OLS solution (for $k\ge\min(p,q)$ we
obtain $\bar{B}=\hat{B}U_{k}{U_{k}}^{\top}=\hat{B}$). In this setting,
RRR and OLS may have high-variance, whilst LRPS can still benefit
from variance reduction. 

\subsection{Other Related work}

As well as RRR, it is also natural to draw a connection to principal component regression (PCR) \cite{massy1965principal, jolliffe2011principal}, which likewise relies on low-dimensional representations using a least squares objective. In PCR, the predictors are first projected onto the leading principal components of the design matrix $X$, and the response is then regressed onto this reduced set of components. The resulting estimator thus depends on the principal subspace of $X$, rather than directly on the response as in LRPS. Table \ref{table:methods} provides a high-level overview of the three methods, LRPS, RRR and PCR in comparison to OLS. 

\begin{table}[!h]
\centering

\begin{tabular}{l p{8.5cm}}
\hline
\textbf{Method} & \textbf{Coefficient Estimator} \\
\hline
\\
\textbf{OLS} 
& 
$\displaystyle 
\hat B_{\text{OLS}} = 
(X^\top X)^{-1} X^\top Y
$
\\[1.0em]

\textbf{RRR} 
&
$\displaystyle 
\hat B_{\text{RRR}} 
= 
\hat B_{\text{OLS}} 
U_k U_k^\top,
$
\\
[1.0em]

\textbf{LRPS} 
&
$\displaystyle
\hat B_{\text{LRPS}} 
= \hat B_{\text{OLS}} 
V_k V_k^\top,
$
\\[1.0em]
\textbf{PCR} 
&
$\displaystyle 
\hat B_{\text{PCR}}
= (Z_k^\top Z_k)^{-1} Z_k^\top Y\quad\mathrm{where}\quad Z_k = X Q_k 
$
\\[1.0em]

\hline
\end{tabular}
\caption{Comparison of OLS, RRR, LRPS and PCR, where we define $U_k,V_k,Q_k$ be the top $k$ right singular vectors of $X\hat{B}_{\mathrm{OLS}},Y$ and $X$ respectively. \label{table:methods}}
\end{table}

One may also consider the James-Stein (JS) shrinkage estimator \cite{james1961estimation}, which shrinks the OLS estimator toward zero through a scale-dependent shrinkage factor. Unlike approaches such as LRPS or RRR, the JS estimator does not explicitly impose a low-rank structure on either the response matrix $Y$ or the covariates $X$, but instead achieves regularisation solely via global shrinkage of the OLS coefficients.  We provide empirical comparison of the methods in Table \ref{table:methods} in Section \ref{sec:Simulation-Study}.

\subsection{Computational considerations}\label{sec:complexity}

In this section, we consider the computational complexity of the LRPS estimator, and compare it to that of RRR, to motivate the regression situations in which LRPS will provide benefits.  

\begin{proposition}{Computational cost of LRPS \label{prop:complexity}
}

For the regression model (\ref{eq:mrreg}), the computational complexity of the LRPS estimator is\footnote{ The power of the cubic term ($p^3$) in the expression for the complexity can be reduced to $p^\gamma$, for some $\gamma<3$, depending on the algorithm used to compute the inverse of $S_X$ (see e.g. \cite{tonks}), but this is not crucial to our discussion here.} 
\[
O(np^2 + p^3 + k((n+p)(p+q)+\min\{p,q\})).
\]
\end{proposition}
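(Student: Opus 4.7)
The plan is to break the formula $\tilde{B} = S_X^{-1} S_{XY} \hat{V}_k \hat{V}_k^\top$ into elementary operations (matrix multiplication, matrix inversion, truncated eigendecomposition), tally the complexity of each using standard counts for dense linear algebra, and verify the sum matches the claim.

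Design-side precomputation is responsible for the first two summands and is independent of $Y$ and $k$: forming $S_X = n^{-1} X^\top X$ is an $n$-length Gram reduction of cost $O(np^2)$, and inverting $S_X$ via an LU or Cholesky factorization costs $O(p^3)$. These contributions are exact regardless of the algorithm used for the remaining steps.

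For the rank-$k$ projector I would produce $\hat{V}_k$ using a truncated-SVD routine, either a Krylov/Lanczos iteration operating on $Y$ directly via $O(k)$ matrix--vector products of cost $O(nq)$ each, or (equivalently, as flagged in Section~\ref{sec:LRPS}) a Lanczos iteration on $S_Y$ of cost $O(kq^2)$, together with an $O(k \min\{p,q\})$-scale reorthogonalization overhead. I would then evaluate $\tilde{B}$ by associating the products so that every intermediate matrix is thin in the $k$ direction, namely $\tilde{B} = n^{-1} S_X^{-1} \bigl[X^\top (Y \hat{V}_k)\bigr] \hat{V}_k^\top$. The four products $Y\hat{V}_k$, $X^\top(\cdot)$, $S_X^{-1}(\cdot)$, and $(\cdot)\hat{V}_k^\top$ cost $O(nqk)$, $O(npk)$, $O(p^2 k)$ and $O(pqk)$ respectively. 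Summing these and factoring, $k(nq+np+p^2+pq) = k\bigl(n(p+q)+p(p+q)\bigr) = k(n+p)(p+q)$, so combining with the design-side precomputation and the eigendecomposition overhead recovers the stated bound.

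The main obstacle is not any single difficult step but rather pinning down the $k\min\{p,q\}$ summand, which is not produced by any of the four dominant multiplications above. One must argue that it captures either the low-order orthogonalization cost inside the truncated SVD (vectors of length $q$ when $q\le p$, or an equivalent $p$-dimensional working space in the complementary regime), or the cost of emitting $\tilde{B}$ in a suitable factored form; a small amount of case analysis on whether $p \le q$ or $q \le p$ is needed to show either route fits the bound. Everything else is routine bookkeeping.
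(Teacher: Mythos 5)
Your tally is correct and follows essentially the same route as the paper: precompute $S_X^{-1}$ at $O(np^2+p^3)$, obtain the rank-$k$ factors at $O(nqk)$, and associate the remaining products so every intermediate is thin in $k$, giving $k(nq+np+p^2+pq)=k(n+p)(p+q)$. The one point you flag as unresolved, the $k\min\{p,q\}$ summand, has a more prosaic origin than reorthogonalization overhead: the paper evaluates $\tilde{B}=(X^\top X)^{-1}X^\top U_k D_k V_k^\top$ using the truncated SVD factors of $Y$ rather than forming $Y\hat{V}_k$ as you do, and the leftover term is simply the cost of applying the diagonal matrix $D_k$ at the cheapest point in the chain --- scaling the $k\times q$ matrix $V_k^\top$ for cost $O(qk)$ when $p>q$, or scaling the $p\times k$ intermediate $(X^\top X)^{-1}X^\top U_k$ for cost $O(pk)$ when $q>p$ --- which is exactly the small case analysis on $p$ versus $q$ that you anticipated. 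Your alternative association via $Y\hat{V}_k$ never touches $D_k$ and so never generates this term, but since $k\min\{p,q\}$ is dominated by $k(n+p)(p+q)$ your bookkeeping already implies the stated bound; the extra summand in the proposition is an artefact of the authors' particular factored evaluation, not something your argument is missing.
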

The details of the computation of the above expression can be found in Appendix \ref{sec:appendixprop}.  In essence, the LRPS estimator can be computed directly in the latent space, and thus involves matrix multiplications of lower dimension to that of RRR. 

\paragraph{Comparison with RRR.}  Using similar arguments to those for the derivation of the result in Proposition \ref{prop:complexity} (Appendix \ref{sec:appendixprop}), the complexity of RRR using equation \eqref{eq:rrr_reinsel} is
$$O(np^2+p^3) + O(nqk) + O(npq) + O(p^2q) + O(npq) + O(q^2k) + O(pq^2).$$

Both LRPS and RRR algorithms perform two common operations, namely computing the inverse of $S_X$ of complexity $O(np^2+ p^3)$, and performing the truncated SVD of a matrix of dimension $n\times q$ of complexity $O(nqk)$ ($Y$ for LRPS, $X\hat{B}$ for RRR).  These are the first two terms in the computation above and in Appendix \ref{sec:appendixprop} for the LRPS estimator.  Removing this from their respective computations, the terms corresponding to the two estimators can be more directly compared:
$$
\begin{array}{cccccc}
RRR:&  O(npq)    & O(p^2q) & O(pq^2)& O(q^2k)& O(npq) \\
&  \updownarrow & \updownarrow & \updownarrow & \updownarrow &  \\
LRPS:&    O(npk) & O(p^2k) &O(pqk)& O(\min\{p,q\}k)&
\end{array}
$$


For the first three terms, there is a scaling factor (reduction) of $\frac{k}{q}$ from RRR to LRPS, a significant reduction for the fourth term, and the fifth term of RRR is additional to the computation.  In light of this, LRPS is expected to be more computationally efficient than RRR if $k$ is chosen to be low (due to the first three terms), and for large $q$ settings as signified by the fourth term.

\subsection{Empirical analysis and finite-sample behaviour}

\label{subsec:Empirical-Analysis-and}

To give some intuition on the differences between the three estimators, LRPS, RRR, and
OLS, we consider a simple simulation study in which we empirically track
the MSE, the bias, and the variance of the estimators as a function
of $k$. We consider two scenarios, one where the true rank $r(B)=k_{*}=p/2$
is small relative to $\min(p,q)$, and the other where $k_{*}=\min(p,q)$.
For simplicity, we let $p=q=10$, $k_{*}=\{5,10\}$ and consider three
sample sizes $n=\{20,30,50\}$ to assess the finite-sample behaviour
of the estimators. In each case, we assume the model (\ref{eq:mrregvec}),
we simulate $X$ and $E$ as a set of isotropic Gaussian random vectors
(independent rows). Further details on the simulation design can be
found in Section \ref{sec:Simulation-Study}, where we present results
from a more comprehensive study.

\begin{figure}[!h]
\includegraphics[width=1\columnwidth]{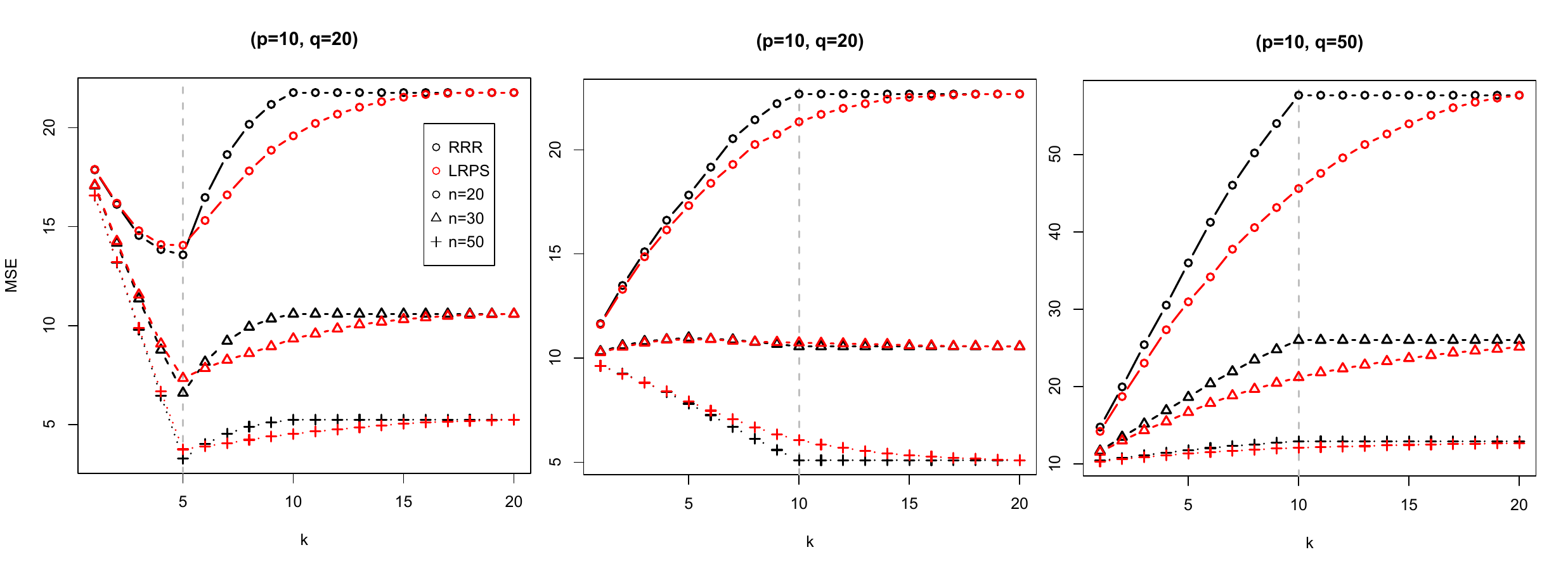}

\caption{Performance of LRPS and RRR in terms of empirical $\mathrm{\hat{MSE}=}m^{-1}\sum_{s=1}^{m}\|Y^{(s)}-X^{(s)}B^{(s)}\|_{F}^{2}$
where $Y^{(s)},X^{(s)}$ and $B^{(s)}$ represent the data, and estimate
(LRPS in red, and RRR in black) for simulation $s$, and $m=100$
is the number of simulations.\label{fig:Performance-of-LRPS}}
\end{figure}

The results in Figure \ref{fig:Performance-of-LRPS} shed some light
on the similarities and differences between LRPS and RRR. When the true model is
low-rank, e.g. when $k_{*}=5$, we see that for all sample sizes
the minimum MSE is attained at $r(\bar{B})=k_{\star}$
and RRR performs slightly better in terms of estimation error at this
minimum. On the other hand, when the low-rank assumption no longer
holds ($k_{*}=10$), we see that LRPS outperforms RRR in the small sample size
setting, whereas RRR becomes superior as the sample size $n$ grows. Indeed, in the small
sample setting, a rank-1 approximation is preferred
by both approaches, however, LRPS still maintains better performance.
In the scenario where $q$ is large relative to $p$ (Figure \ref{fig:Performance-of-LRPS}, right panel), LRPS exhibits superior estimation performance in the full-rank scenario, consistently over a range of sample sizes. This latter point is important as the setting where $q>p$ often occurs in practice. For instance, this is often observed in datasets from air pollution monitoring stations, where measurements of several pollutants are collected across multiple locations. This is discussed later in our application of LRPS to real data.

\begin{figure}[!h]
\begin{centering}
\includegraphics[width=0.9\columnwidth]{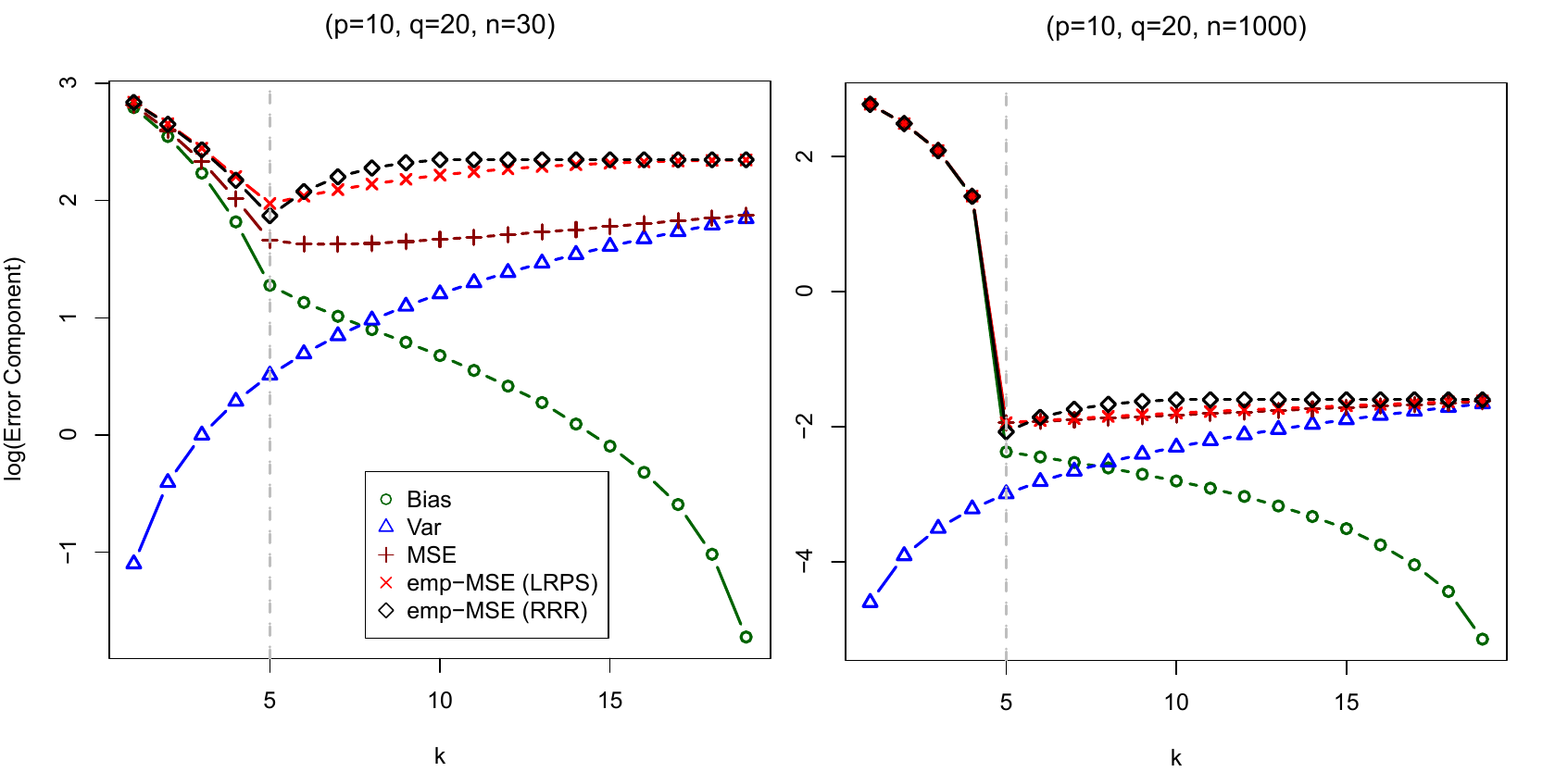}
\par\end{centering}
\caption{The $\mathrm{bias}^{2}(\tilde{B})$, $\mathrm{Var}(\tilde{B})$,
and $\mathrm{MSE}(\tilde{B})$ values extrapolated from the LRPS
asymptotic distribution in Theorem \ref{theorem:lrpsdist}, compared
to the finite-sample (empirical) MSE for both LRPS and RRR. In this
simulation, $k_{*}=5$ and other than the sample size, the conditions
are the same as those given in the left panel of Figure \ref{fig:Performance-of-LRPS}.
\label{fig:biasvar}}
\end{figure}

A characteristic property of LRPS is that the estimator responds
to changes in $k_{*}>\min(p,q)$, whereas RRR falls back to the OLS estimator
in this regime. To investigate the bias-variance tradeoff occurring
due to the pre-smoothing we plot the $\mathrm{bias}^{2}(\tilde{B})$,
$\mathrm{Var}(\tilde{B})$, and $\mathrm{MSE}(\tilde{B})$
components from (\ref{eq:lrpsmse}), and compare them with the empirical
MSE achieved in a finite sample, in this case, letting $n=30$ and
$n=1000$. The results are presented in Figure \ref{fig:biasvar}. As one may expect, extrapolating from the asymptotic
to the finite sample setting may not be appropriate for small samples,
but the approximation becomes tight as $n$ increases. Comparing the dark red (+) and red ($\times$) lines in Figure \ref{fig:biasvar}, we note that they essentially overlap in the larger $n=1000$
setting. We now provide more insight into the performance of LRPS through in-depth simulations.

\section{Simulation study}

\label{sec:Simulation-Study} In this section, we evaluate the performance of our proposed
technique by using simulated examples to assess the estimation of the regression
coefficient matrix, as well as out-of-sample predictive performance. The data $Y, X$ are simulated according to \eqref{eq:mrreg}, where we fix the design of the covariates to be $X\sim\mathcal{N}_{n\times p}(0,I_p\otimes I_n)$. We vary the specification of the true coefficient matrix $B$, and
the covariance structure of the errors $\Sigma_{e}$. The choices
here cover a range of scenarios that are of interest in practice and
help further elucidate the behaviour of LRPS (and RRR) when the spectrum
(eigenstructure) of the signal $XB$ and noise $E$ vary. The various conditions for $B$ and $\Sigma_{e}$ are given below.

\subsubsection*{Coefficient matrix conditions: }

\begin{description}

\item{\emph{1 - Sparse and low-rank:}} Let $s$ be the expected
rank of the coefficient matrix. We first select active elements for
$B'_{ij}$ by sampling from the Bernoulli distribution, $B'_{ij}\sim\text{Bernoulli}(s^{-1})$
for $j=1,\ldots,s$. The non-zero coefficients in each row are then
scaled (by dividing by the row-average) to maintain $\|B_{i,\cdot}\|_{2}=1$. 

\item{\emph{2 - Dense and unknown rank:}} Two random
matrices $A_{L} \in \mathbb{R}^{p \times p}, A_{R} \in \mathbb{R}^{q \times q}$ are generated with standard normal, and $\text{Uniform}(0,1)$
entries respectively. Let $L$ and $R$ be the left and right singular
vectors of $A_{L}$ and $A_{R}$, the coefficient matrix is then
generated according to
\[
B=A_{L}\mathrm{diag}(d_{1},\ldots d_{\min(p,q)})A_{R}^{T}\;,
\]
where $d_{i}=\lambda^{i}$ exhibits geometric decay. 

\item{\emph{3 - Dense and known rank:}} The matrix $B$ is constructed
as in Condition 2, however, the diagonal singular values are replaced
with $d_{i}=\lambda$ for $i=1,\ldots,k_{*}$ and $d_{i}=0$ for $k_{*}+1,\ldots,\min(p,q)$.
This creates a matrix which has rank $k_{*}$. Note, the examples
demonstrated in Section \ref{subsec:Empirical-Analysis-and} also
use matrices generated in this manner.

\end{description}

\newpage

\subsubsection*{Noise conditions:}

\begin{description}

\item{\emph{1 - Isotropic IID noise:}} The covariance of the errors
is given by $\Sigma_{e}=\sigma^{2}I_{q}$.

\item{\emph{2 - Geometric Decaying Toeplitz:}} We let $\Sigma_{e}=\mathrm{Toeplitz}(\rho)$
such that $\Sigma_{e;i,j}=\sigma^{2}\rho^{|i-j|}$.

\end{description}

Note that simulations in this section as well as the examples in Section \ref{subsec:Empirical-Analysis-and} were produced in the {\em R} statistical computing environment \citep{Rcore}. 

\subsection{Estimation error results}

This section examines the estimation error incurred by LRPS and RRR
in the scenarios described above. As in the previous
section, we focus on understanding the performance of the methods
over a range of $k$, and examine how this performance is dependent
on the composition of the signal and noise. For simplicity, we fix
$n=100$ and report performance as measured by the empirical mean squared error from the true coefficient matrix $B$, averaged across $m=100$ experiments. 

\subsubsection*{Large $q$, moderate $p$}

In this case, we consider when $q=100>p=10$ and thus the range of
$k$ with which we can apply LRPS is extended relative to RRR (which
we recall limits $k\le\min(p,q)$). In this scenario, we should expect
that the structure of the error covariance plays a key role, and
given the number of outcomes, the regression problem inherits a lower
signal-to-noise as $\|\Sigma_{e}\|_{F}\asymp q$ whilst $\|XB\|_{F}\asymp\sqrt{pq}$.
We run three experiments in this setting: 
\begin{itemize}
\item The first probes what we may expect to happen if $B$ were sparse,
under $B$ condition 1. In this case, the rank of the matrix is bounded
by $s$, which we choose to be $p/2=5$, since sparsity is a common
assumption in high-dimensional regression, we wanted to investigate
how methods that previously would focus explicitly on a low-rankness
assumption can also be applied in this setting. 
\item Second, we investigate the case where $B$ is dense, but with decaying
singular values (Condition 2), and we consider $\lambda=\{0.9,0.5,0.2\}$
to study the impact of the decay rate. 
\item Third, we examine the impact of a spiked error covariance via a geometrically
decaying correlation structure with values $\rho=\{0.9,0.5,0.2\}$.
The specification of $B$ is given via Condition 3 with a flat spectrum ($\lambda=1$).
\end{itemize}
The last two experiments fix $\sigma^{2}=1$, whereas the first fixes
the design of $B$ and varies $\sigma^{2}=\{0.5,1,2\}$ to directly
investigate the impact of the error variance on estimation performance.
Figure \ref{fig:singularvals} summarises the distribution of $\gamma_{j}^{2}(B)$ and $\gamma_{j}(\Sigma_{e})$,
where for each $j=1,\ldots q$ we take the average over $m=100$ experiments.

\begin{figure}[!h]
\includegraphics[width=1\columnwidth]{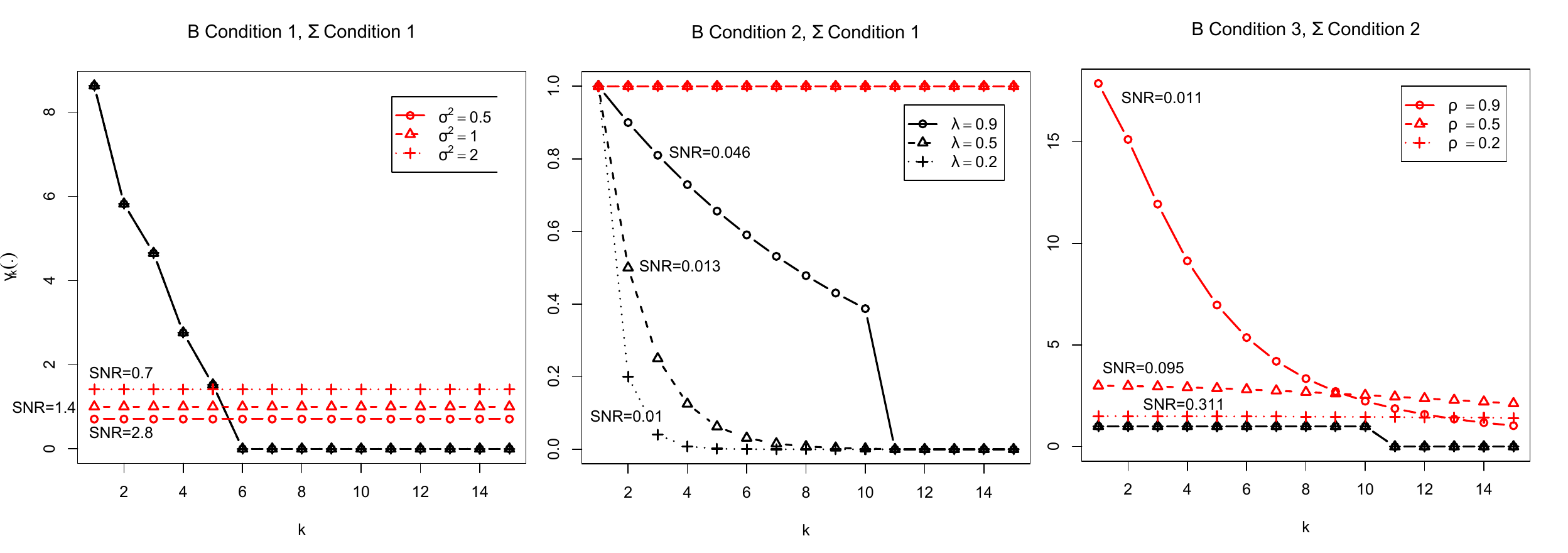}

\caption{Distribution of singular values for the signal $\gamma_{j}(B)$ and
noise $\gamma_{j}^{1/2}(\Sigma_{e})$, in black and red respectively.
The signal-to-noise ratio $\mathrm{SNR}:=\|XB\|_{F}/\|\Sigma_{e}\|_{F}$
is given next to each line, and the lines represent the average value
of the spectrum over $m=100$ simulations. \label{fig:singularvals}}
\end{figure}

The results of the experiment are presented in Figure \ref{fig:res_large_q}, with experiments
1--3 running from left to right. The figure shows a range of empirical
behaviours for the RRR and LRPS estimators and illustrates both estimators'
potential benefits beyond the OLS estimation. The best rank approximation,
i.e. the $k$ which minimises the estimation error is a function of
both the specification of the signal (via $B$), and the noise structure---in
general, a higher signal-to-noise ratio mandates the choice of a smaller
$k$. Intuitively, this observation makes sense, as in the high-noise settings,
we should desire to constrain the estimator more to avoid excessive
variance, i.e. the variance reduction benefits of the low-rank projection
dominate the error. Observe, at least when $q>p$, that LRPS is often
superior to RRR in these approximations, e.g. when comparing the rank-1
approximation performance in the middle and right plots.

\begin{figure}[!h]
\includegraphics[width=1\columnwidth]{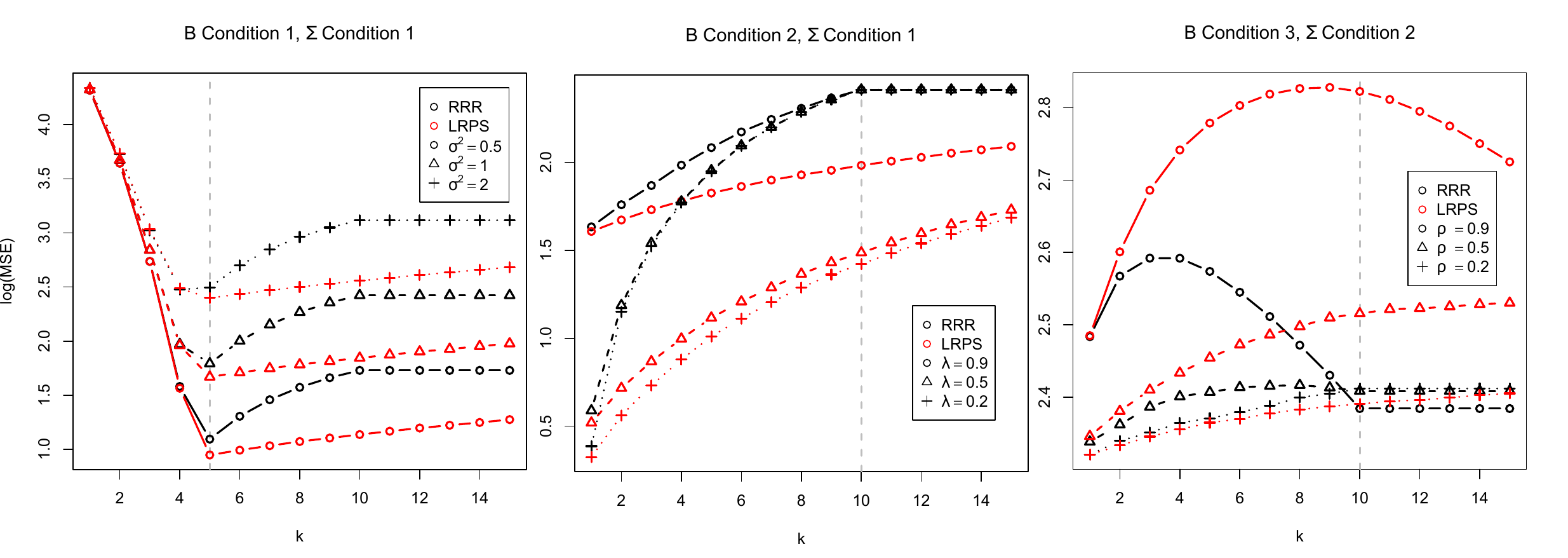}

\caption{Comparison of LRPS and RRR performance as a function of $k$ in the
case where $p=10$, $q=100$, $n=100$. Black lines denote the performance
of RRR and red those of LRPS. In the middle and right plots, we fix
$\sigma^{2}=1$ and vary the eigenvalues of the signal and noise respectively
by adjusting $\lambda$ and $\rho$. In $B$ Condition 3, we set all
the singular values of $B$ to be one, and maintain that $r(B)=p$
indicated by the dashed line, in the left plot, we set the sparsity
to be $s=5$.\label{fig:res_large_q}}
\end{figure}

\subsubsection*{Moderate $q$, large $p$}

In contrast to the high $q$ setting, in this simulation we investigate the relatively
high-dimensional estimation problem where $q=10$ is fixed and we
set $p\in\{20,40,60\}$, we fix the sample size at $n=100$. In this
situation, the true rank of $B$ is limited to be less than or equal to $q=10$. Specifically, we simulate $B$ matrices according to Condition 3, and set
the rank $k_{*}=5$. We vary the SNR according by adopting $\Sigma$-Condition
1, and letting $\sigma^{2}\in\{0.5,1,2\}$. The results of the experiment
are seen in Figure \ref{fig:fixed_q_growing_p}. We see that both
RRR and LRPS perform well relative to the OLS, and whilst RRR outperforms LRPS
for $k\le k_{*}$ the difference between the two estimators is small.
In this case, the best approximating model, i.e. the $k$ which minimises
the MSE, depends not only on the SNR mediated by $\sigma^{2}$, but
also the variance in the estimators due to the higher number of covariates
$p>q$. As $p$ approaches $n$ from below, we see that again the
rank-1 approximation achieves the best performance for both LRPS and RRR.
These results highlight the importance of performing the pre-smoothing
(projection) to reduce noise, either when the dimensionality of covariates
is high, and/or when the SNR is low. Generally when $k\ge\hat{k}=\arg\min_{k}\mathrm{MSE}(k)$,
we see that LRPS outperforms RRR, and the difference in performance
can be substantial in the large $q$ setting.

\begin{figure}
\includegraphics[width=1\columnwidth]{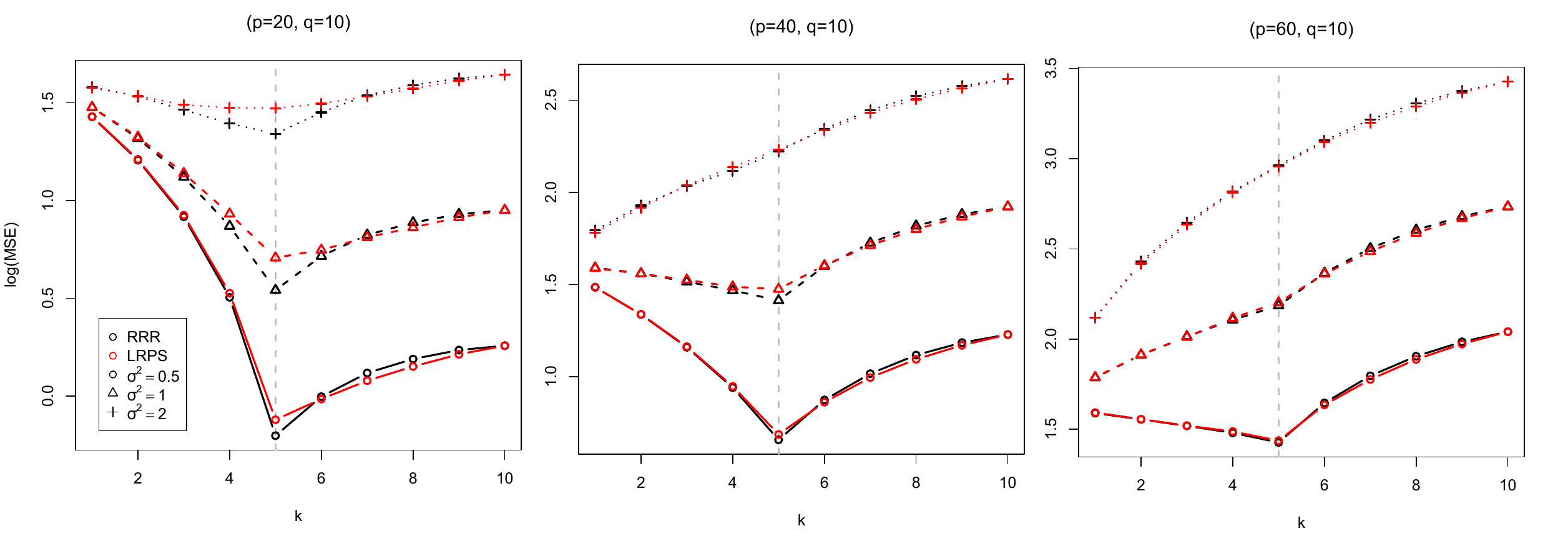}

\caption{Comparison of LRPS and RRR performance as a function of $k$ in the
case where $q=10$, $n=100$ and $p\in\{20,40,60\}$. The covariance
of errors is given by $\Sigma$-Condition 1 with $\sigma^{2}\in\{0.5,1,2\}$,
the true rank of $B$ is denoted by the vertical grey line.\label{fig:fixed_q_growing_p}}
\end{figure}

\paragraph{Sensitivity to heavy-tailed errors.}
To conclude our investigations into empirical estimation error, we also examine the performance of LRPS compared with RRR under heavy-tailed error settings. Specifically,  multivariate heavy tail errors are generated from a multivariate $t$-distribution with  3 degrees of freedom (using the \texttt{mvtnorm} \textit{R} package \cite{mvtnorm2009}). The simulation setup follows the large $q = 100$, moderate $p = 10$ and $n = 100$ design while varying the structure of the coefficient matrix $B$. The error covariance is taken to be either diagonal or Toeplitz, matching the two different noise conditions for the covariance structures used in the Gaussian setting. The results are shown in Figure \ref{fig:heavytail}. 

In the left panel, corresponding to a low-rank sparse coefficient matrix $B$, RRR performs well when the rank is correctly specified but suffers a notable loss in accuracy when the rank is overestimated. By contrast, in the two right panels, where $B$ is dense, LRPS consistently performs better than RRR across the range of ranks considered. Overall, the performance gains of LRPS relative to OLS and RRR appear robust to the case of heavy-tailed errors.

\begin{figure}[!h]
\includegraphics[width=1\columnwidth]{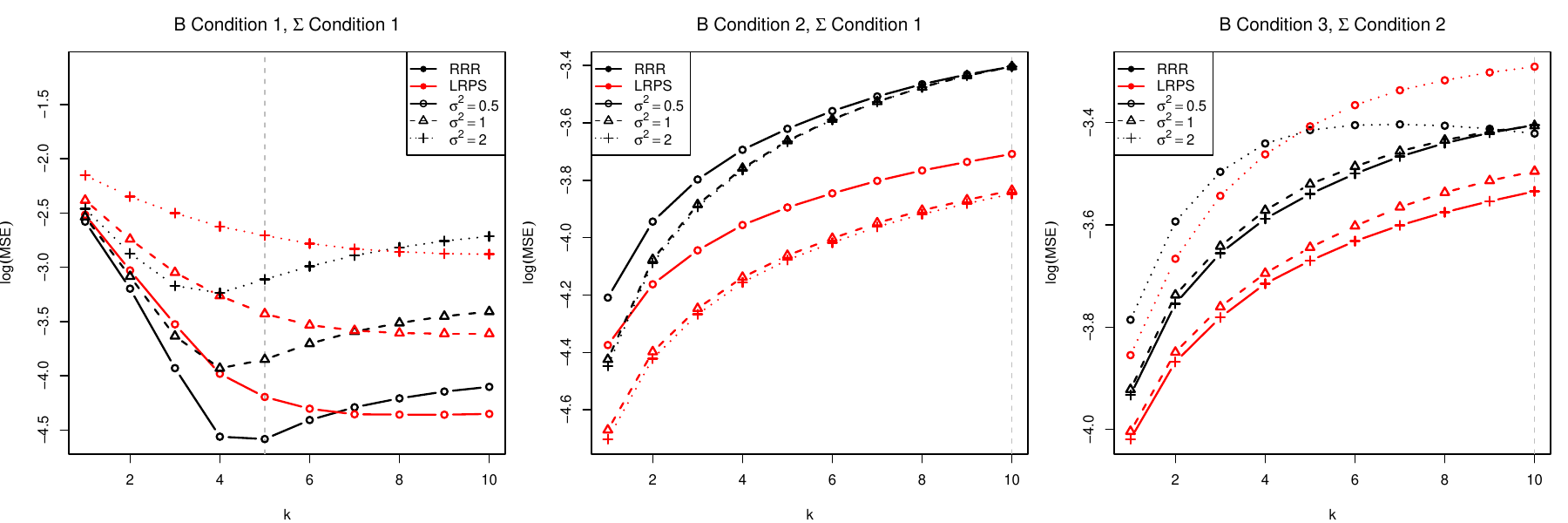}
\caption{Comparison of coefficient estimation error across LRPS and RRR under a heavy-tailed error. All results are averaged over 100 simulations with $p = 10$, $q = 100$, and $n = 100$ with different $B$ conditions.}
\label{fig:heavytail}
\end{figure}

\subsection{Predictive performance}
For LRPS and RRR, we see that the choice of $k$ is
important. In this section, we demonstrate the effectiveness of cross-validation for choosing $k$, and further evaluate the predictive performance of the LRPS and RRR estimators, in an out-of-sample setting.

Before considering methods to select $k$, we first consider the sensitivity of the estimation performance to mis-specification of $k$ in some examples. To this end, Figure \ref{fig:simsensitivity} presents plots of the mean squared predictive error as a function of the rank parameter $k$ for LRPS and RRR, under the large $q$, moderate $p$ setting corresponding to Figure \ref{fig:res_large_q}. The performance trends observed in the predictive plots are consistent with those seen in the estimation of the coefficient matrix, as shown in Figure \ref{fig:res_large_q}. Notably, in scenarios where the error spectrum is relatively flat and LRPS is expected to perform well, it outperforms RRR (the left two panels). In contrast, when the error covariance $\Sigma_e$ in settings where the error covariance exhibits exhibits a spiked spectrum (the right panel), LRPS performs less favourably, highlighting the limitations of the method under such conditions. 

\begin{figure}[!h]
\includegraphics[width=\columnwidth]{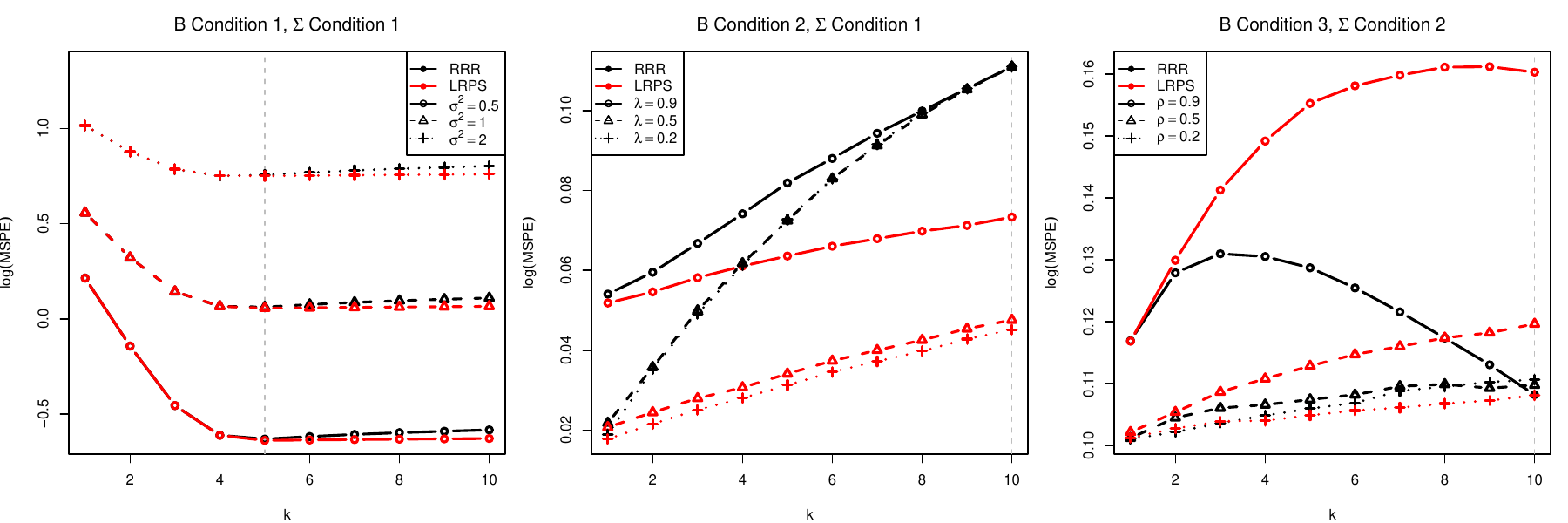}
\caption{Sensitivity plots of out-of-sample prediction risk versus $k$ for LRPS and RRR in the case where $p=10$, $q=100$, $n=100$. Black lines denote the performance of RRR, and red denotes the performance of LRPS.}
\label{fig:simsensitivity}
\end{figure}

In general, the selection of how strong we wish our constraints to be in statistical inference may depend on the task at hand. For example, in the lasso ($\ell_1$ regularised regression), if one wishes to perform an optimal selection of non-zero coefficients, the specification of the regularisation (constraint) may be different from that for producing optimal predictions. In that setting, typically cross-validation will overestimate the support of the regression function compared to more conservative information criteria. Moving back to the rank-restricted setting with RRR and LRPS, we may encounter a similar position---that is, do we construct an information criterion for the selection of $k$, or do we use some in-sample empirical measure of prediction error, e.g. cross-validation. As we are focusing on global constraints (i.e. working with linear projections across all variates), we here focus on predictive performance rather than parametric inference when deciding the appropriate specification of $k$, and to this end, adopt a simple cross-validation procedure.

Specifically, we propose to apply 2-fold cross-validation, such that we minimise
\begin{equation}
\mathrm{cvMSPE}(k):=\frac{1}{2}\left(\|Y^{(1)}-X^{(1)}\hat{B}^{(2)}W_k^{(2)}\|_F^2 +
\|Y^{(2)}-X^{(2)}\hat{B}^{(1)}W_k^{(1)}\|_F^2\right)\;,\label{eq:cvMSPE}
\end{equation}
where $Y^{(1)}$ indicates data obtained from the first half of the data and $Y^{(2)}$ the second half of the data. Note, we use the eigenvectors from the corresponding training dataset, i.e., $W_k^{(j)}=V_k^{(j)}V_k^{(j)\top}$ are based on the first $k$ singular vectors of $Y^{(j)}\in\mathbb{R}^{n/2\times q}$, where $j \in \{1,2\}$.  The $\mathrm{cvMSPE}(k)$ represents a popular in-sample measure of the out-of-sample performance, in particular giving (for large $n$) a good approximation to a form of predictive risk \cite{bates2024cross}. 

In the following experiments, we consider the estimates $\hat{k}$ according to 
\[
\hat{k} = \arg\min_{k=1,2,\ldots} \mathrm{cvMSPE}(k)\;,
\]
for both RRR and LRPS. Note, the version stated in (\ref{eq:cvMSPE}) can be readily applied to RRR by simply replacing $V_k$ with $U_k$ (from Definition \ref{dfn:rrr}). To evaluate the performance of the selection criteria, we consider the empirical distribution of $\hat{k}$ alongside the out-of-sample expected prediction error $\mathbb{E}_{(y',x')}\|y'-x'\hat{B}V_{\hat{k}}V_{\hat{k}}^\top\|_F^2$
where $y',x'$ are assumed to be generated from the same generating process that generated the training data, i.e. $y'\sim\mathcal{N}_q(0,B^\top \Sigma_X B + \Sigma_e)$ and $x'\sim\mathcal{N}_p(0,I_p)$.

\begin{figure}[!h]
\includegraphics[width=1\columnwidth]{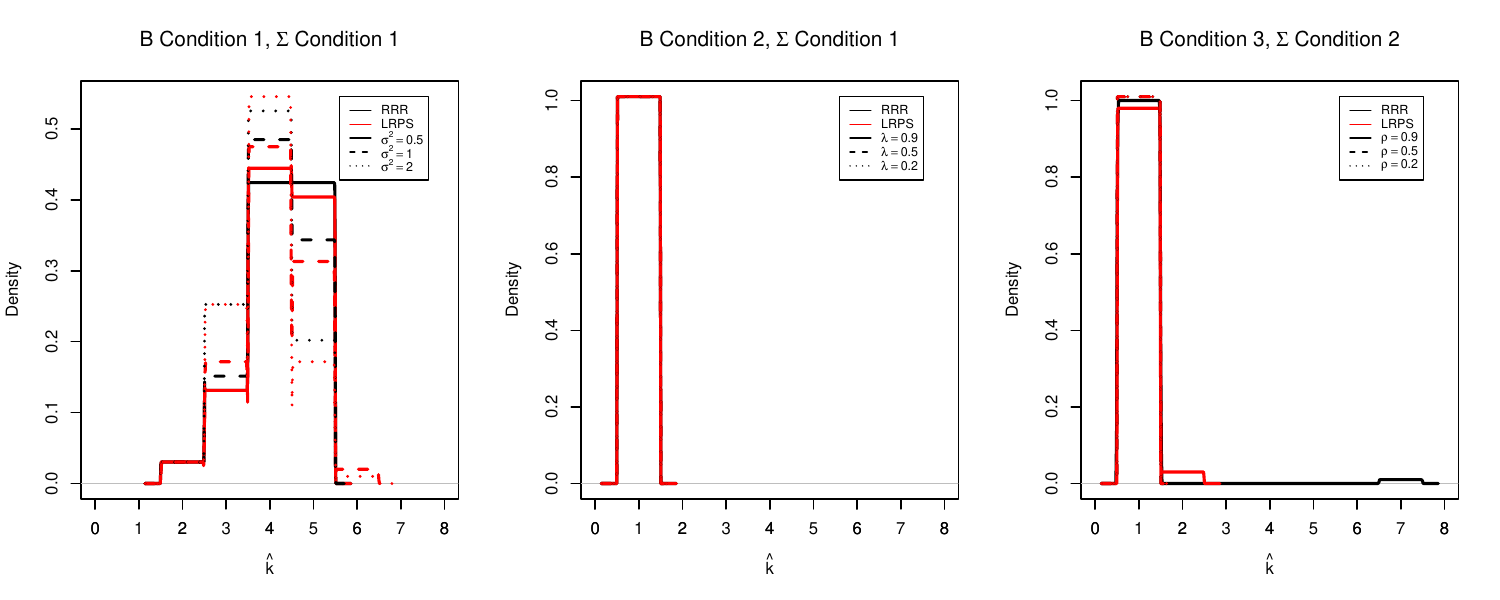}
\includegraphics[width=1\columnwidth]{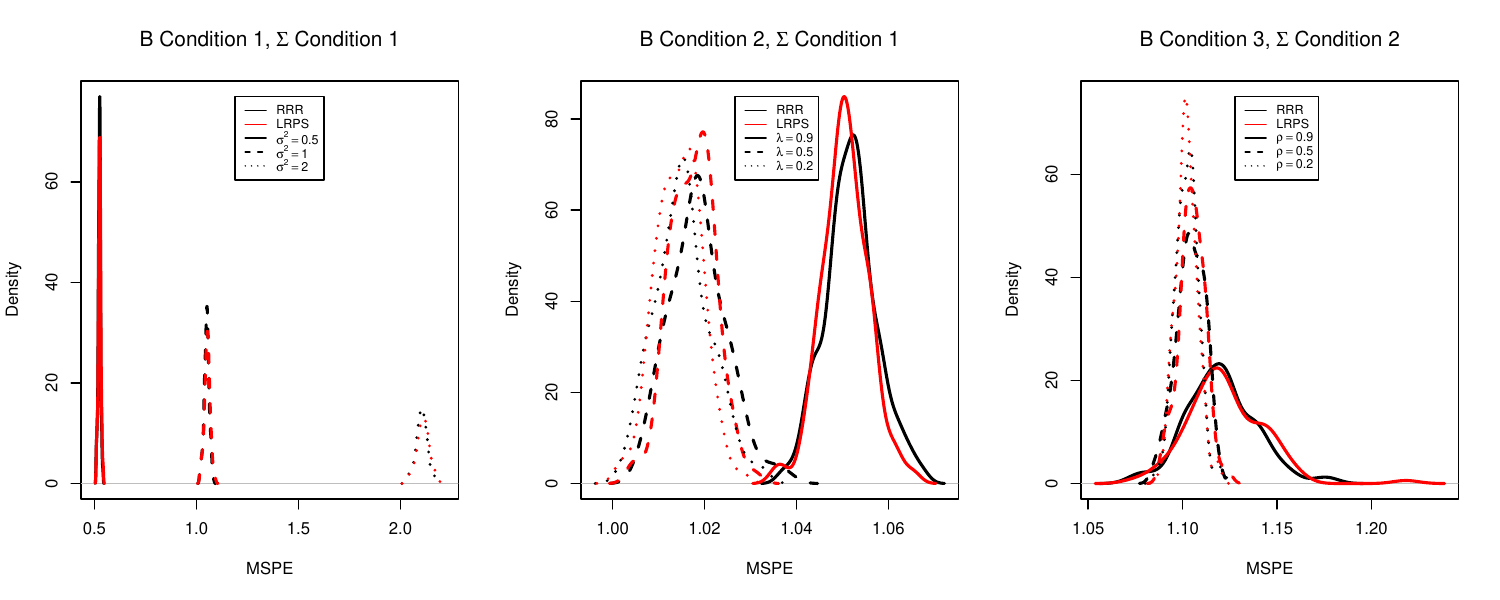}

\caption{Top: Distribution of $\hat{k}$ (chosen by 2-fold cross-validation); Bottom: Distribution of prediction error. Both distributions are estimated across $100$ replications. Prediction error is based on applying the singular vectors from the training set $n=100$ to the out-of-sample data. The simulation settings mirror those in Figures \ref{fig:singularvals} and \ref{fig:res_large_q}, with RRR given by the black lines, and LRPS given by those in red.  \label{fig:pred_err}}
\end{figure}

The results of our experiment are presented in Figure \ref{fig:pred_err}. The simulation settings are exactly the same as those considered in Figure \ref{fig:res_large_q}. In this case, the difference between LRPS and RRR is negligible---both LRPS and RRR estimators appear to adapt its rank (and model complexity) via the proposed cross-validation procedure, i.e. $\hat{k}$ depends on the simulation design (e.g. signal-to-noise ratio, noise/signal distribution).

It is worth noting that, in this study, we adopt a 2-fold cross-validation scheme for simplicity; however, this is not the only approach for selecting the rank parameter $k$. In other applications, $k$ may be chosen adaptively for each task or fixed across different experimental settings. For example, in time-series neuroscience applications, the optimal rank can be selected using rolling-window or blocked cross-validation schemes that respect temporal dependence. In multi-subject studies, the rank parameter may also be tuned at the subject level or at the group level, depending on whether the primary interest lies in individual-specific or population-level dynamics.



\subsection{Performance comparison with PCR}\label{sec:compwithpcr}
When $p$ is large, it is natural to consider PCR, as it captures the most relevant structure in $XB$. While a comprehensive comparison among LRPS, RRR, and PCR is of interest and will be pursued in future work, we focus here solely on a comparison between LRPS and PCR. Using the same experimental settings as in Figure \ref{fig:res_large_q}, we fix $q = 100$, $n = 100$, and $p = 10$, and vary the conditions of $B$ and $\Sigma_e$ through different values of $\lambda$ and $\rho$. Under these settings, we examine and compare the coefficient estimation performance of LRPS and PCR.

Figure \ref{fig:lrps_pcr} demonstrates the performance of LRPS relative to PCR across various settings. The results exhibit patterns similar to those observed in the comparison between LRPS and RRR. In scenarios where LRPS outperforms RRR, it also maintains an advantage over PCR. Conversely, in settings where LRPS is expected to perform worse than RRR, its performance likewise falls short when compared with PCR. Overall, these findings reinforce the same conclusion, such that in cases with a flat error spectrum, LRPS continues to achieve the best performance.

\begin{figure}[!h]
\includegraphics[width=\columnwidth]{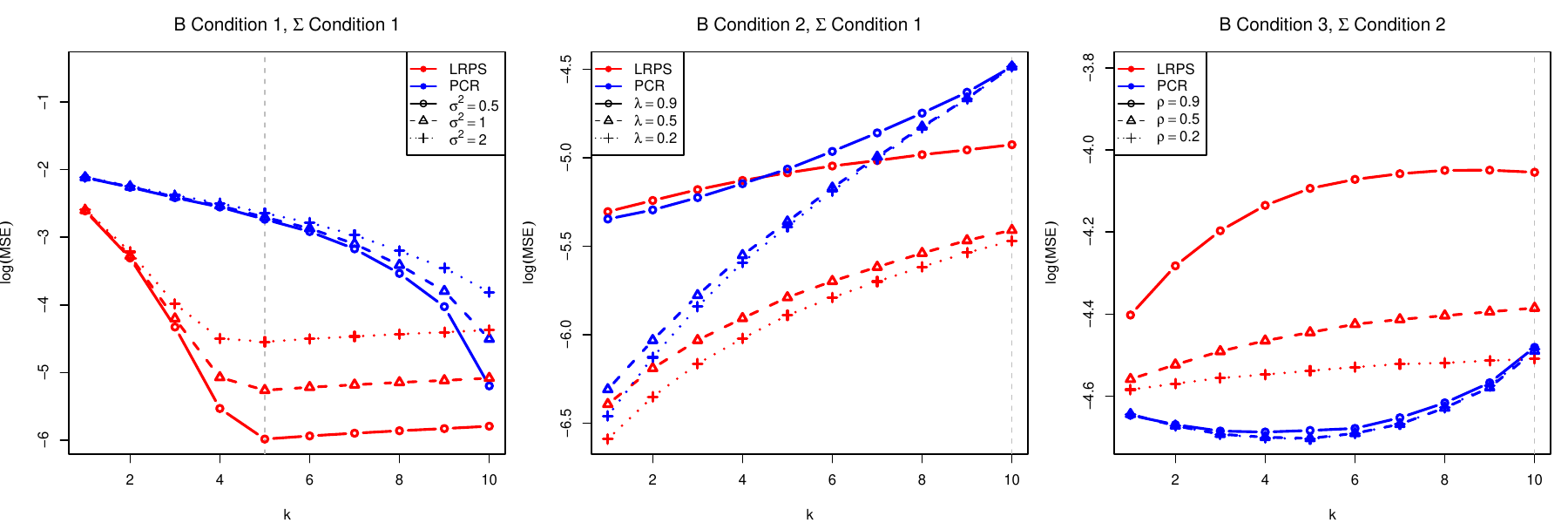}
\caption{Comparison of LRPS and PCR performance as a function of $k$ in the
case where $p=10$, $q=100$, $n=100$. Red lines denote LRPS and blue lines denote PCR.}
\label{fig:lrps_pcr}
\end{figure}

\subsection{Computational cost investigation}\label{sec:compcost}

To evaluate and compare the computational efficiency of the three methods OLS, RRR, and LRPS, we conduct a series of simulations by systematically varying one of the dimensions using scaling factors from 1 to 5 (sample size $n$, response dimension $q$ or covariate dimension $p$) while keeping the others fixed. For simplicity, we fix $k = 1$ for both LRPS and RRR. 

\subsubsection*{The case when $p > q$}
In the $p > q$ case, we fix $q = 10$ and let either $n$ or $q$ increase with the scaling factor, starting from $n = 200$ and $p = 20$. For each setting, we generated simulation datasets and measured the average computation time over 100 repetitions for each method. The top plots in Figure \ref{fig:scaling1} display the average computation time relative to OLS, where the computational time of OLS is normalised to 1 at each scaling level. In other words, the plotted values for RRR and LRPS reflect their runtime as a multiple of the corresponding OLS runtime. This illustrates the relative computational cost of adopting one of the low-rank regression methods as data dimensionality increases. It is evident that as $n$ or $p$ increase, LRPS consistently outperforms RRR in terms of computational time relative to OLS (Figure \ref{fig:scaling1}, top-left and top-right panels). When examining the within method computation time relative to the runtime for the initial simulation setup, we observe that there is a mildly worse scaling with increasing $n$ (Figure \ref{fig:scaling1}, bottom-left). computational cost of LRPS increases at a faster rate than that of RRR. However, when examining the scaling of the methods with increasing $p$, whlist all methods have a quadratic scaling behaviour, LRPS increases at a slower rate than both OLS and RRR. 

\begin{figure}[!h]
\includegraphics[width=1\columnwidth]{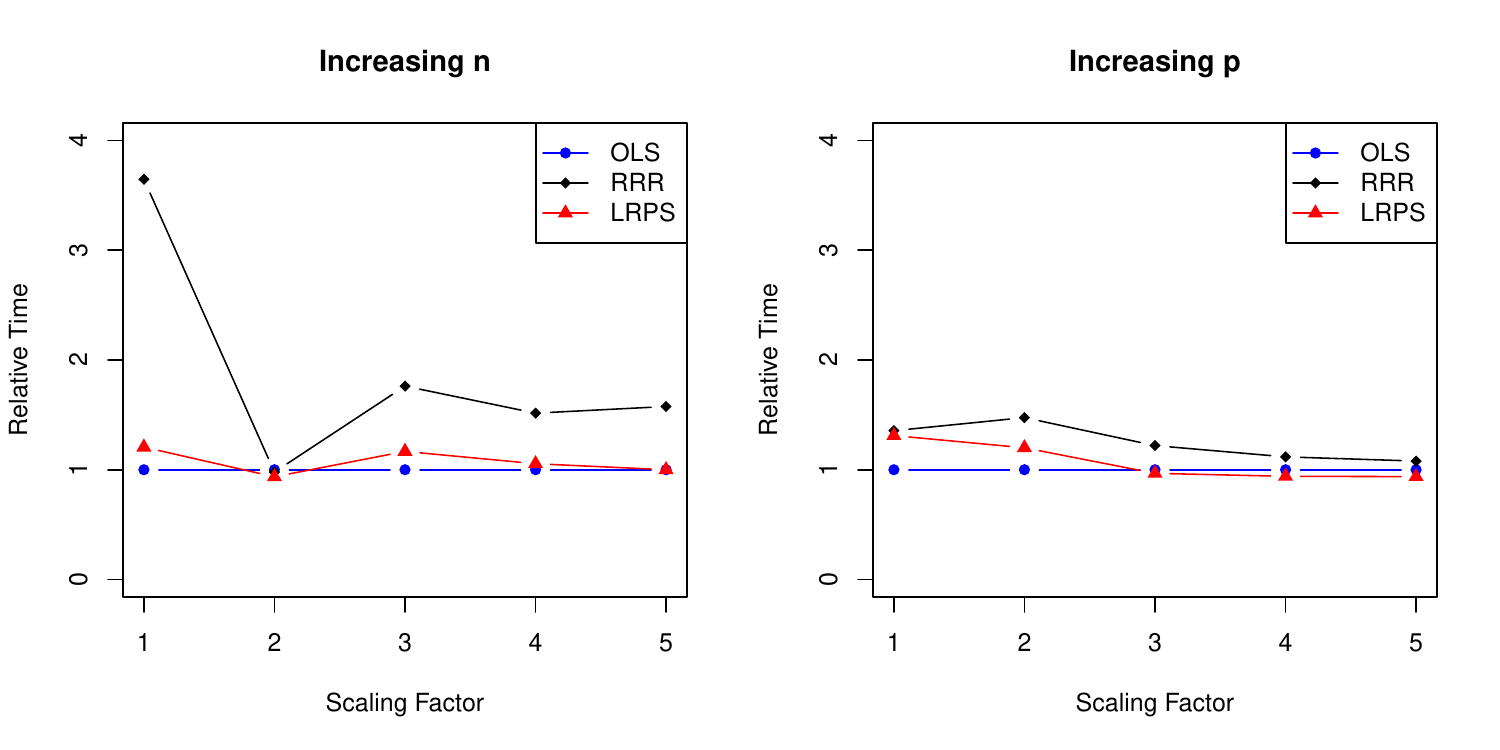}
\includegraphics[width=1\columnwidth]{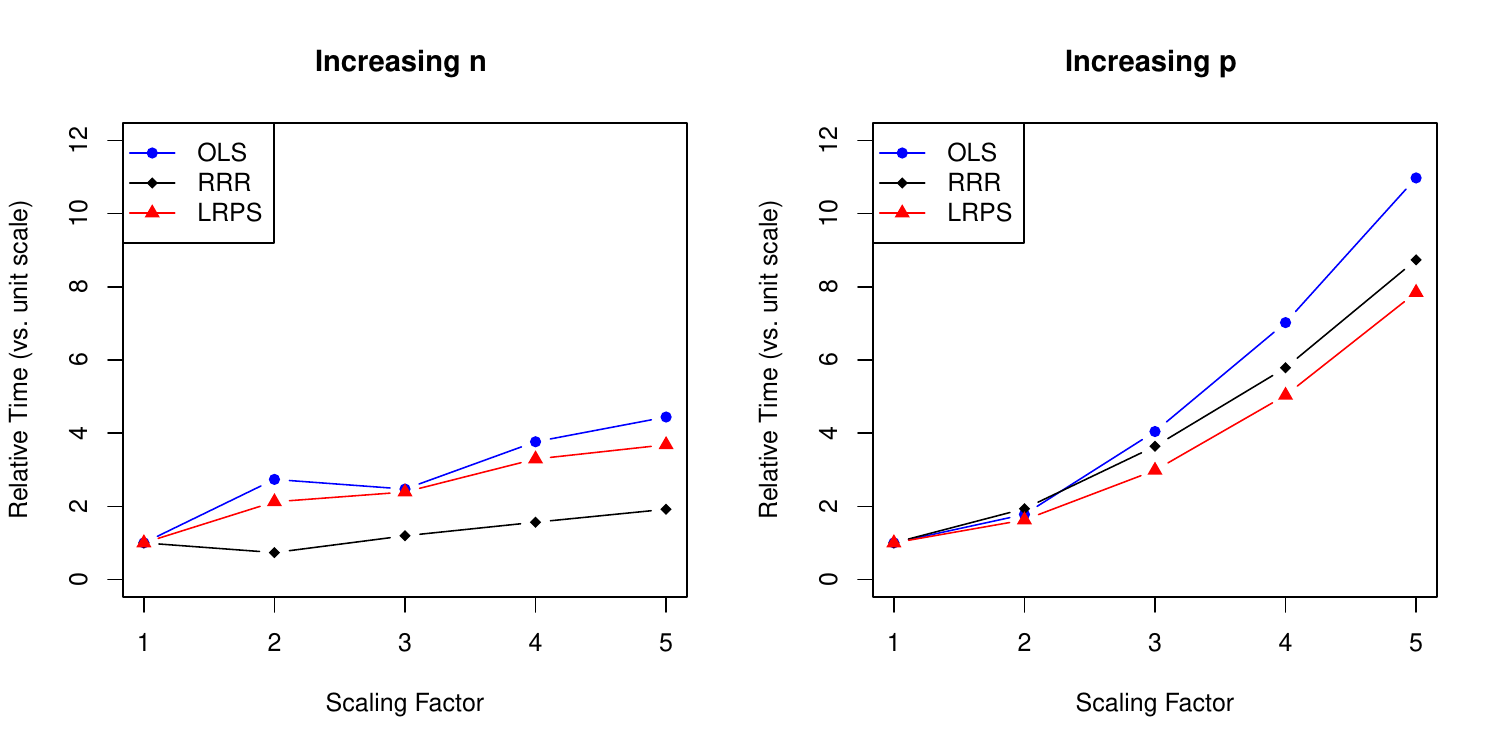}
\caption{Top panels: Computation time of RRR and LRPS relative to OLS across varying data dimensions, where OLS time is normalised to 1 at each scaling level. Bottom panels: Computation time of OLS, RRR and LRPS relative to their own runtime at scaling factor 1 under varying data dimensions. Left: Varying $n$ with $q = 10$ and $p = 20$; Right: Varying $p$ with $n = 200$ and $q = 10$. }
\label{fig:scaling1}
\end{figure}

\subsubsection*{The case when $q > p$}

In the $q > p$ case, we always fix $p = 30$ and let either $n$ or $q$ increase with the scaling factor, starting from $n = 60$ and $q = 100$. Similar to above, for each setting and each method, we calculate the average computation time over 100 simulated datasets. The top plots in Figure \ref{fig:scaling2} display the average computation time relative to OLS. Notably, as $n$ or $q$ increases, LRPS consistently outperforms RRR in terms of computational efficiency, demonstrating faster runtimes relative to OLS. The bottom plots show the computational cost of each method relative to its own runtime at the scaling factor of 1. From the plots, we observe that the growth in computation time for LRPS is consistently lower than that of RRR, indicating better scalability with respect to increasing data dimension.  In particular, both the OLS and LRPS illustrate linear growth with $q$, whereas RRR has a mild quadratic scaling. 
\begin{figure}[!h]
\includegraphics[width=1\columnwidth]{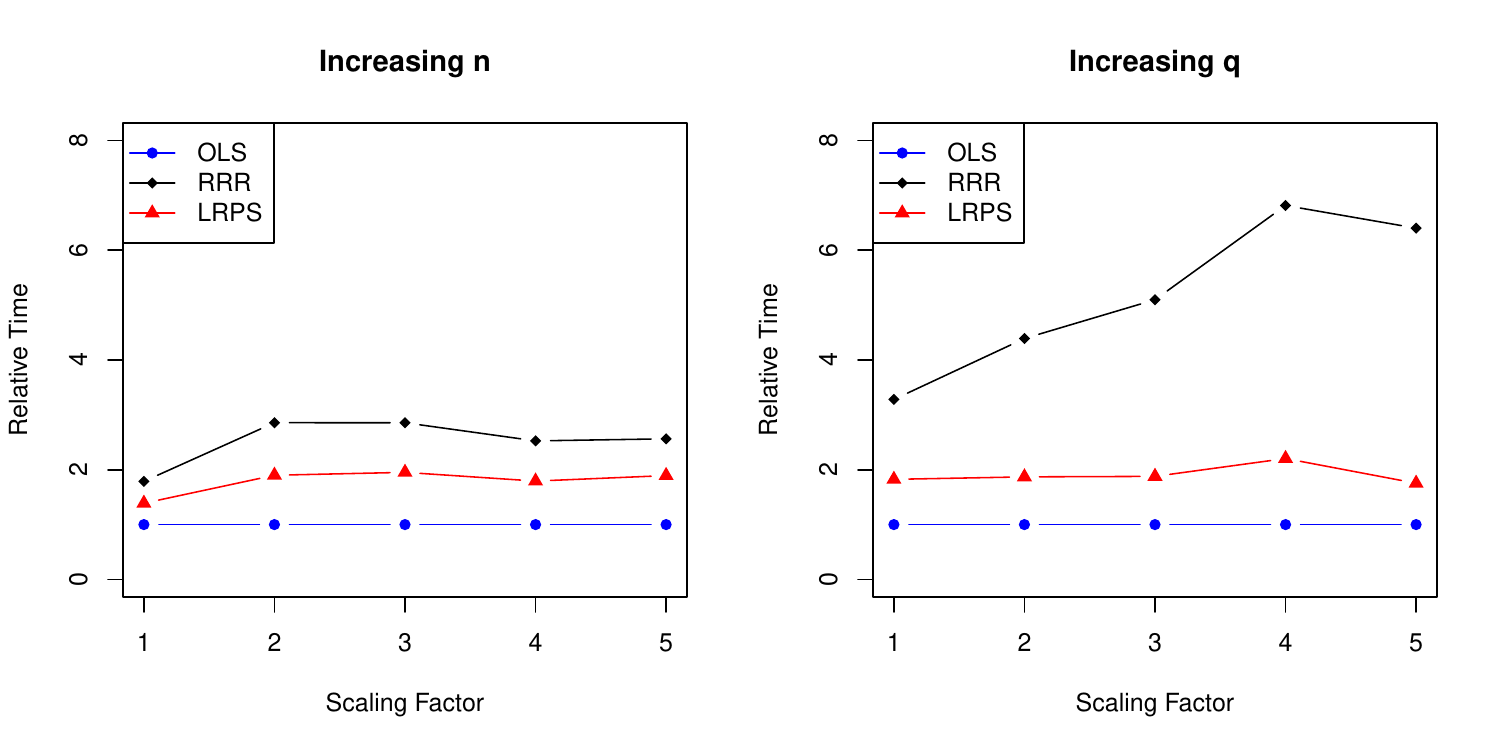}
\includegraphics[width=1\columnwidth]{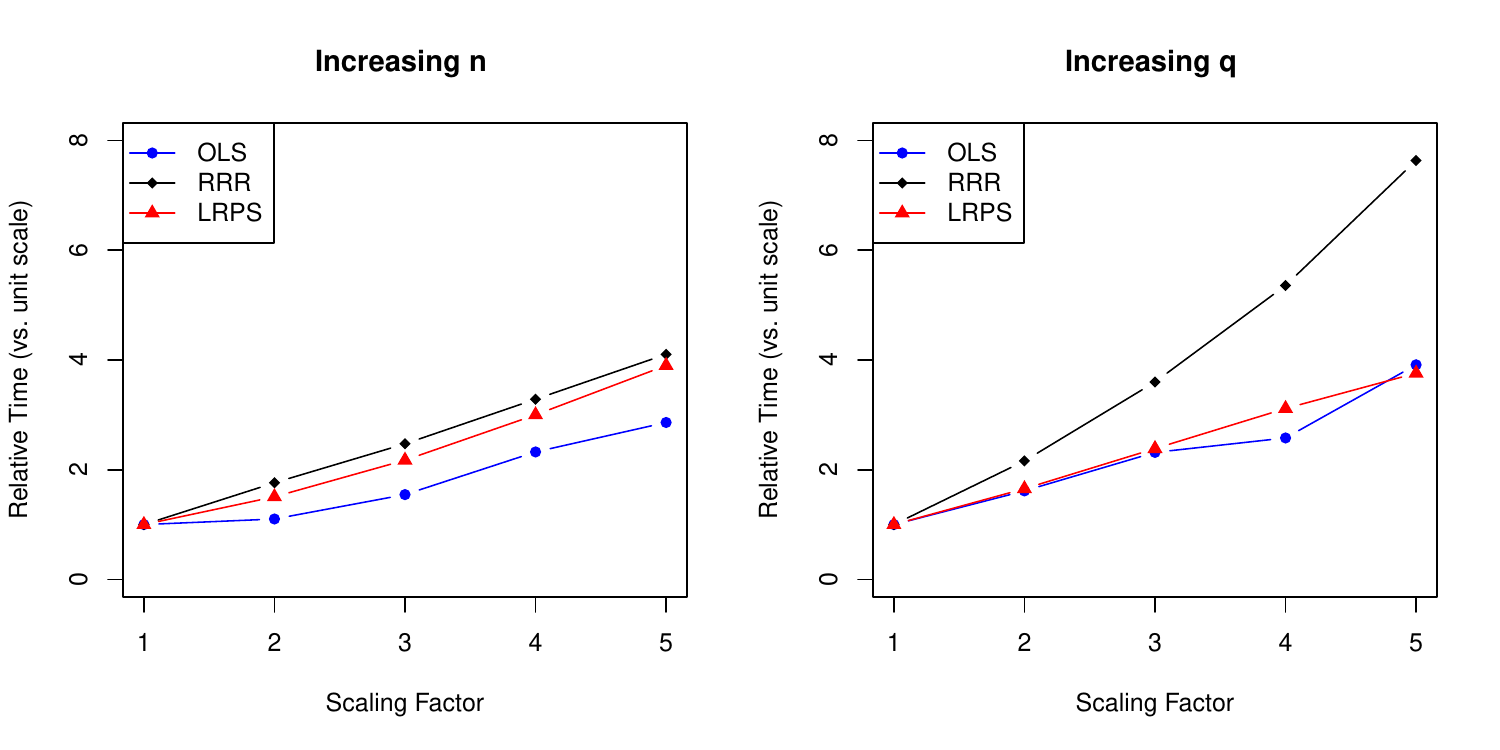}
\caption{Top panels: Computation time of RRR and LRPS relative to OLS across varying data dimensions. OLS time is normalised to 1 at each scaling level. Bottom panels: Computation time of OLS, RRR and LRPS relative to their own runtime at scaling factor 1 under varying data dimensions. Left: Varying $n$ with $p = 30$ and $q = 100$; Right: Varying $q$ with $n = 60$ and $p = 30$. }
\label{fig:scaling2}
\end{figure}
These simulations confirm the observations in Section \ref{sec:complexity}, where RRR was shown to have additional computational complexity compared to LRPS.

We also recorded raw empirical runtime and memory usage of LRPS when compared to other multiresponse regression methods.  More specifically, we evaluate the OLS, RRR, LRPS and PCR techniques investigated in this section in a range of regression scenarios.  The raw wall clock computation times are shown in Table \ref{tab:runtimep} in Appendix \ref{app:runtime} for when q is fixed, examining performance as $n$ or $p$ increases, whilst Table \ref{tab:runtimeq} fixes $p$ and examines performance as $n$ or $q$ increases. For the first setting, LRPS shows a slower increase in runtime compared to other methods as the sample size increases; as $p$ increases all methods are comparable.  When $p$ is fixed, PCR appears the most stable as $n$ increases, whereas when $q$ increases LRPS and PCR are comparable.  Tables \ref{tab:memoryp} and \ref{tab:memoryq} report the corresponding memory consumption for experiments examined in Tables \ref{tab:runtimep} and \ref{tab:runtimeq}; the simulations indicate that all methods show stable memory usage across different regression problems.


\section{Application to real data}

\label{sec:data} In this section, we demonstrate our proposed LRPS method to study datasets arising from two scientific applications, the first considering air pollution and the second gene activation data.

\subsection{Air pollution example}

Particulate matter up to $2.5\mu m$ (PM2.5), is a type of air pollution
that poses a significant threat to human health and the environment.
These tiny particles are made up of a complex mixture of solids, chemicals,
and liquid droplets that can penetrate the respiratory system and
cause respiratory and cardiovascular problems. It is recognised that controlling such particular matters is of vital importance for the public health \citep{franck2011effect,pun2017long}. In this example, we consider the possible association between PM2.5 emissions and recordings of other pollutant gases, namely: ozone (O3), sulphur dioxide (SO2), carbon monoxide (CO), and nitrogen dioxide (NO2). To study this association,
we consider three datasets:
\begin{itemize}
\item \textbf{UK:} Pollution across seven cities ($p=7)$ in the United
Kingdom (UK). There are daily recordings of the pollutants for a total
of $n=400$ data-points, $q=28$ (the number of sites $\times$ number
of polluting gases). This data is available from \url{https://uk-air.defra.gov.uk/}. We
use the first  80\% observations $n_{\mathrm{train}}=320$ data points to train the model,
and the remaining 20\% observations $n_{\mathrm{test}}=80$ for evaluation.
\item \textbf{Beijing:} This multi-site air-quality dataset \citep{liang2015assessing}
can be found at the UCI Machine Learning Repository {\url{http://archive.ics.uci.edu/ml/datasets}}
and has been used in the literature in different contexts, e.g. \citet{ma2022multiple}.
In this case, we have hourly air pollutant measurements from $p=12$
outdoor monitoring sites in Beijing, collected from March 2013 to
February 2017. We focus on the air pollutant measurements for January
2017, resulting in a dataset with $n=743$ ($n_{\mathrm{train}}=594$, 
$n_{\mathrm{test}}=149$) observations and $q=48$ outcomes. It should
be noted that this dataset contains some missing values, which for simplicity we replace by the value of the previous observation in the missing variable. 
\item \textbf{USA:} The last dataset is collected from $p=37$ monitoring sites in the United States
of America (USA) from January 2017 to April 2019, resulting in a total
of $n=728$ ($n_{\mathrm{train}}=578$, $n_{\mathrm{test}}=150$)
observations (available from \url{https://www.epa.gov/outdoor-air-quality-data}). This is the largest dataset in this study, with $q=148$
outcomes.  
\end{itemize}

For each dataset we take first-differences (across rows of $Y,X$)
and standardise all variables so they have mean zero and standard deviation
1. This preprocessing follows established practice in previous studies \cite{ma2022multiple, zou2022estimation} and is adopted to promote stationarity of the data. In each case, we estimate the regression parameters $B$ on the
training dataset, and tune $k$ via the 2-fold cross-validation procedure
employed previously. The remaining $n_{\mathrm{test}}$ data points (20\%)
are used for evaluation via the mean-square prediction error

\[
\text{ MSPE }=\frac{1}{qn_{test}}\left\Vert X_{\text{test }}\hat{B}-Y_{\text{test }}\right\Vert _{F}^{2}.
\]

The results of the prediction errors and the rank $k$ used in the
low-rank pre-smoothing model are presented in Table \ref{app:result1}.

\begin{table}[!h]
\centering \caption{Prediction error and optimal rank for the pollution datasets described in the main text. \label{app:result1}}
\begin{tabular}{lccc}
\toprule 
\textbf{Dataset} & \textbf{Method} & \textbf{Prediction Error (MSPE)} & \textbf{Rank} \\
\midrule 
\multirow{3}{*}{UK ($p=7,q=28$)} & LRPS  & \textbf{0.738}  & 5 \\
 & RRR  & 0.749  & 6 \\
 & OLS  & 0.751  & N/A \\
\midrule
\multirow{3}{*}{Beijing ($p=12,q=48$)} & LRPS  & \textbf{1.145}  & 12 \\
 & RRR  & 1.154  & 6 \\
 & OLS  & 1.175  & N/A \\
\midrule
\multirow{3}{*}{USA ($p=37,q=148$)} & LRPS  & \textbf{0.911}  & 2 \\
 & RRR  & 0.916  & 1 \\
 & OLS  & 0.925  & N/A \\
\bottomrule
\end{tabular}
\end{table}

In all cases, LRPS provides better predictive performance than the
more traditional approaches. Generally, LRPS and RRR outperform OLS. Whilst the differences in performance are not
vast, the results demonstrate that low-rank approximations may
generally be useful, and it is interesting that in these cases the
rank selected is always lower than $\min(p,q)$. In the case of the
USA, we see that the rank selected is $k=1,2$ in RRR and LRPS respectively.
This behaviour was seen in our synthetic experiments in the high $q$
setting, e.g. see $B$ Condition 2, $\Sigma$ Condition 1 results
in Figure \ref{fig:res_large_q}.

\subsection{Gene association example}

In addition to the air pollution datasets, we applied our proposed
methodology to a genetic association dataset from \citet{wille2004sparse}.
This dataset is from a microarray experiment designed to explore regulatory
mechanisms within the isoprenoid gene network of \emph{Arabidopsis
thaliana}, also known as \emph{thale cress} or \emph{mouse-ear cress}.
Isoprenoids are known to play numerous essential biochemical roles
in plants. To track gene expression levels, $n=118$ GeneChip microarray
experiments were conducted. The dataset includes $p=39$ predictor
genes from two isoprenoid biosynthesis pathways, MVA and MEP, while
the response variables consist of the expression of $q=795$ genes
across 56 metabolic pathways. To reduce skewness in the distributions, all variables are log-transformed before fitting the linear models. We use the first $n_{\mathrm{train}}=96$ observations as the training set, and then calculate the prediction error using the remaining $n_{\mathrm{test}}=22$ observations.

\begin{table}[!h]
\centering \caption{Prediction error and optimal rank for the genetic association dataset described in the main text. \label{app:result4}}
\begin{tabular}{lcc}
\toprule 
\textbf{Method}  & \textbf{Prediction Error (MSPE)}  & \textbf{Rank} \\
\midrule 
LRPS  & \textbf{0.414}  & 2 \\
RRR  & 0.418  & 2 \\
OLS  & 0.481  & N/A \\
\bottomrule
\end{tabular}
\end{table}

The results are presented in Table \ref{app:result4}, and demonstrate
the benefits of the low-rank approximation similar to the air-pollution
dataset. In this case, we have a very large $q$ compared to $p$,
and both are relatively large when compared to the number of data-points
$n$. It is worth remarking that our holdout dataset in this case
is only 
 $n_{\mathrm{test}}=22$ 
observations, however, the performance
is also averaged across the number of outcomes $q=795$ which is relatively
large. The analysis here can be considered relatively ``high-dimensional''
in nature. Both LRPS and RRR choose $k=2$ as the best approximating
rank, potentially illustrating behaviour akin to $B$ Condition 1/2
and $\Sigma$ Condition 1 in our synthetic experiments, i.e. where
the error spectrum was relatively flat, but the signal was concentrated
in terms of the top singular values. Whilst the difference between
LRPS and RRR is not large, LRPS maintains the better performance (in-line
with the synthetic experiments and the other application), and both
low-rank methods considerably outperform OLS.

\subsection{Sensitivity to rank $k$}

Figure \ref{fig:appsensitivity} presents a sensitivity analysis of the rank parameter $k$ with respect to predictive performance for both the LRPS and RRR methods. In datasets with a small number of observations, such as the UK dataset, the optimal performance may occur at very low ranks (e.g., $k = 1$), yet cross-validation can tend to overestimate the rank in this regime. In contrast, for larger datasets, cross-validation is no longer constrained to selecting a rank of 1 and instead is able to identify higher-rank structures that improve performance for the application.

\begin{figure}[!h]
\includegraphics[width=0.5\columnwidth]{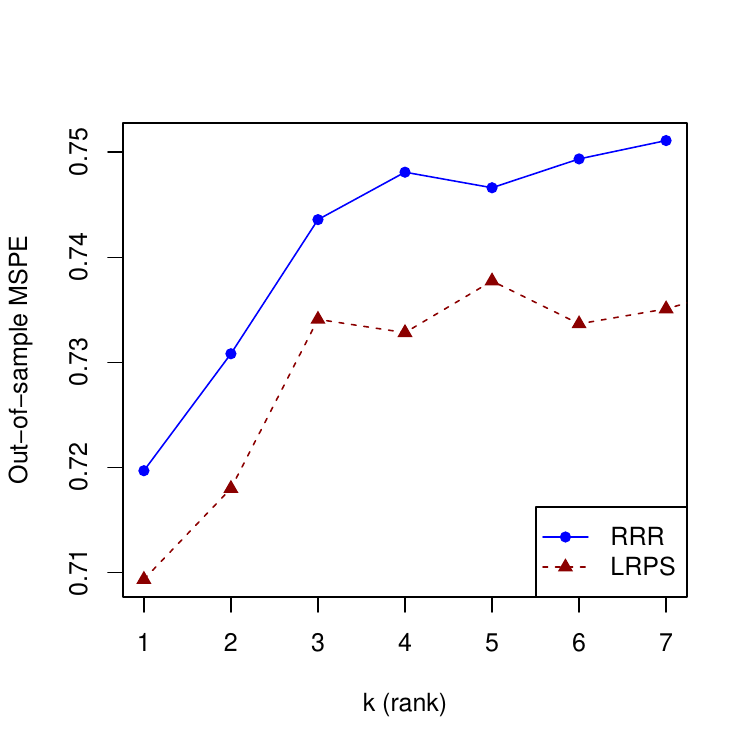}
\includegraphics[width=0.5\columnwidth]{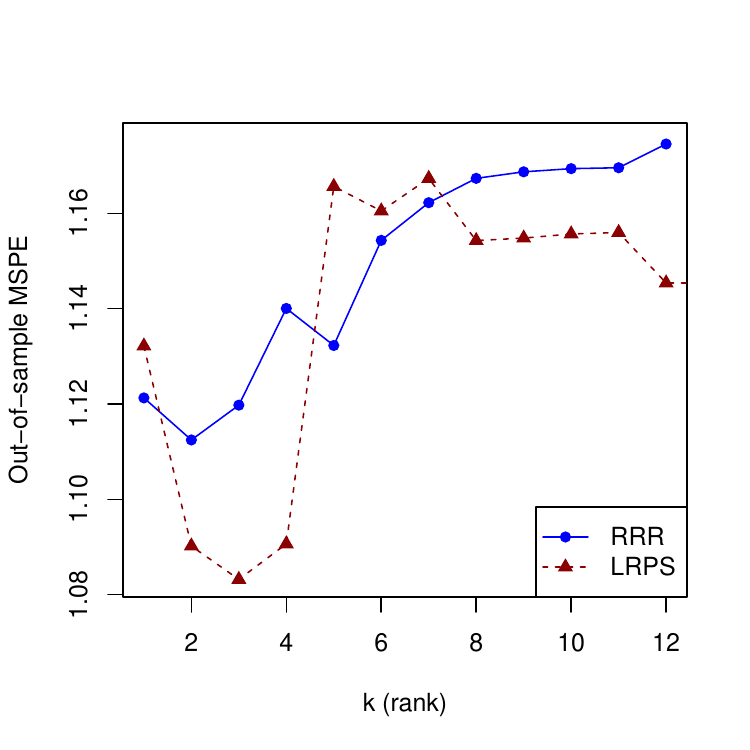}
\includegraphics[width=0.5\columnwidth]{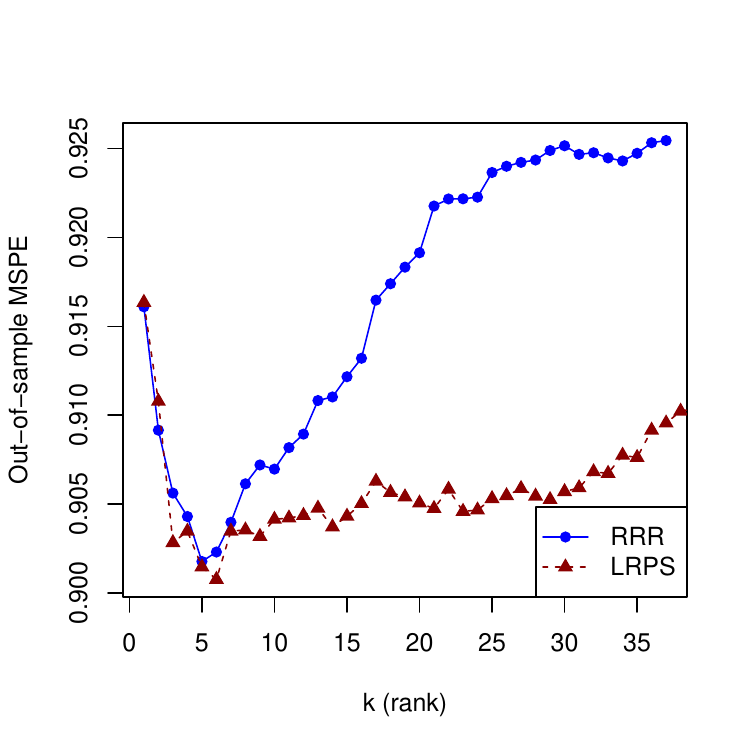}
\includegraphics[width=0.5\columnwidth]{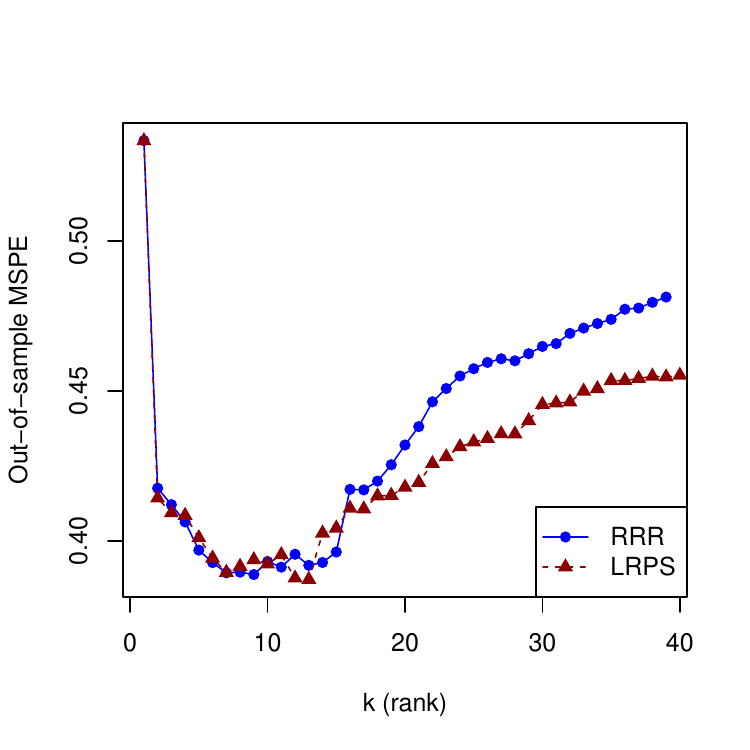}
\caption{Sensitivity plots of out-of-sample prediction risk versus $k$ for LRPS and RRR on real datasets. Top-left: UK dataset; Top-right: Beijing dataset; Bottom-left: US dataset; Bottom-right: Genetic dataset.}
\end{figure}
\label{fig:appsensitivity}

\section{Conclusions}\label{sec:concs}

This article proposes {\em Low-Rank Pre-Smoothing}, a technique for multi-response regression settings.  When viewed as a projection, the method can be seen as an initial step to smooth observed data to increase its signal-to-noise ratio, after which a traditional estimation method such as ordinary least squares (OLS) can be performed for estimation and subsequent prediction.  Our technique outperforms the traditional OLS estimator in terms of mean square estimation and out-of-sample prediction error, driven by sacrificing a small amount of bias in favour of reducing the estimation uncertainty. This improved performance was demonstrated through extensive simulated examples, our methodology showing particular benefits when a large number of responses is observed and/or when the signal-to-noise ratio is low. We also illustrated the utility of our proposed methodology on a number of real datasets. We believe our proposed LRPS estimator provides a simple low-rank regression method that nicely complements existing methods, and can provide a further multi-response regression alternative to a wide range of practitioners. \\
Reflecting on the limitations of the technique, we do note that in some circumstances it may indeed be preferable to use an alternative low-rank regression technique, such as reduced rank regression (RRR) or principal component regression (PCR), and in general the best method will depend on the true generative process. In particular, one may consider using an alternative methodology when the signal $XB$ aligns predominantly in a direction that is not represented by $V_k$, as in these cases, whilst the variance may be significantly reduced by LRPS, there is also significant bias. On the other hand, when $q$ is modest, variance savings may be limited by LRPS, and RRR or PCR may be more beneficial as they focus more closely on the space spanned by the design matrix. In practice, one can choose between the alternative methods by assessing out-of-sample prediction error via cross-validation procedures, which could in principle be extended to choose between low-rank regression methods, alongside an appropriate choice of $k$.

Whilst the motivation for the work in this article arose from parameter estimation and prediction in the classical multi-response setting, we envisage that the tools developed here have the potential to make similar gains in other settings, such as model selection in multi-response regression, nonlinear regression models, and time series forecasting. In particular, one non-trivial extension to the LRPS technique would be in the high-dimensional data setting, which would require development of penalised (LASSO and ridge) regression techniques using LRPS.

\section*{Author Contributions}
All authors have contributed equally to this work.

\section*{Acknowledgements}

This research did not receive any specific funding from agencies in the public, commercial, or not-for-profit sectors.

\section*{Conflict of Interest}
The authors declare no potential conflicts of interest.

\section*{Data Availability Statement}
The datasets used to support the work in this article are available from the links provided in the text of Section \ref{sec:data}.



\bibliographystyle{unsrtnat}
\bibliography{references}

@book{abadir,
  title={Matrix Algebra},
  author={Abadir, K. M. and Magnus, J. R.},
  volume={1},
  year={2005},
  publisher={Cambridge University Press}
}

@article{aerts2010model,
  title={Model selection in regression based on pre-smoothing},
  author={Aerts, M. and Hens, N. and Simonoff, J. S.},
  journal={Journal of Applied Statistics},
  volume={37},
  number={9},
  pages={1455--1472},
  year={2010},
  publisher={Taylor \& Francis}
}

@book{anderson1958introduction,
  title={Introduction to Multivariate Statistical Analysis},
  author={Anderson, T. W.},
  year={2003},
  edition = {3rd},
  publisher={Wiley}
}

@article{anderson1997effects,
  title={Effects of gastropod grazers on recruitment and succession of an estuarine assemblage: a multivariate and univariate approach},
  author={Anderson, M. J. and Underwood, A. J.},
  journal={Oecologia},
  volume={109},
  pages={442--453},
  year={1997},
  publisher={Springer}
}

@article{bates2024cross,
  title={Cross-validation: what does it estimate and how well does it do it?},
  author={Bates, Stephen and Hastie, Trevor and Tibshirani, Robert},
  journal={Journal of the American Statistical Association},
  volume={119},
  number={546},
  pages={1434--1445},
  year={2024},
  publisher={Taylor \& Francis}
}

@book{billingsley2013convergence,
  title={Convergence of Probability Measures},
  author={Billingsley, P.},
  year={2013},
  edition={2nd},
  publisher={John Wiley \& Sons}
}

@article{bonnini2022relationship,
  title={Relationship between mental health and socio-economic, demographic and environmental factors in the {COVID}-19 lockdown period-a multivariate regression analysis},
  author={Bonnini, S. and Borghesi, M.},
  journal={Mathematics},
  volume={10},
  number={18},
  pages={3237},
  year={2022},
  publisher={MDPI}
}

@article{bura2008distribution,
  title={On the distribution of the left singular vectors of a random matrix and its applications},
  author={Bura, E. and Pfeiffer, R.},
  journal={Statistics \& Probability Letters},
  volume={78},
  number={15},
  pages={2275--2280},
  year={2008},
  publisher={Elsevier}
}

@article{burnham1999latent,
  title={Latent variable multivariate regression modeling},
  author={Burnham, A. J. and MacGregor, J. F. and Viveros, R.},
  journal={Chemometrics and Intelligent Laboratory Systems},
  volume={48},
  number={2},
  pages={167--180},
  year={1999},
  publisher={Elsevier}
}

@book{cramer1999mathematical, 
title={Mathematical
Methods of Statistics}, 
author={Cram{\'e}r, H.}, 
volume={26}, 
year={1999}, 
publisher={Princeton University Press} 
}

@article{cristobal1987class,
  title={A class of linear regression parameter estimators constructed by nonparametric estimation},
  author={{Cristobal Cristobal}, J. A. and {Faraldo Roca}, P. and {Gonzalez Manteiga}, W.},
  journal={The Annals of Statistics},
  volume={15},
  number={2},
  pages={603--609},
  year={1987},
  publisher={Institute of Mathematical Statistics}
}

@article{faraldo1987efficiency,
  title={On efficiency of a new class of linear regression estimates obtained by preliminary non-parametric estimation},
  author={{Faraldo Roca}, P. and {Gonz{\'a}lez Manteiga}, W.},
  journal={New Perspectives in Theoretical and Applied Statistics},
  pages={229--242},
  year={1987},
  publisher={John Wiley New York}
}

@article{ferraty2012presmoothing,
  title={Presmoothing in functional linear regression},
  author={Ferraty, F. and {Gonz{\'a}lez Manteiga}, W. and {Mart{\'\i}nez Calvo}, A. and Vieu, P.},
  journal={Statistica Sinica},
  pages={69--94},
  year={2012},
  publisher={JSTOR}
}

@article{franck2011effect,
  title={The effect of particle size on cardiovascular disorders-The smaller the worse},
  author={Franck, U. and Odeh, S. and Wiedensohler, A. and Wehner, B. and Herbarth, O.},
  journal={Science of the Total Environment},
  volume={409},
  number={20},
  pages={4217--4221},
  year={2011},
  publisher={Elsevier}
}

@article{gospodinov2017spurious,
  title={Spurious inference in reduced-rank asset-pricing models},
  author={Gospodinov, N. and Kan, R. and Robotti, C.},
  journal={Econometrica},
  volume={85},
  number={5},
  pages={1613--1628},
  year={2017},
  publisher={Wiley Online Library}
}

@book{greene2018econometric,
  title={Econometric Analysis},
  author={Greene, W. H.},
  year={2018},
  edition={8th},
  publisher={Pearson Education India}
}

@article{hitchcock2006improved,
  title={Improved estimation of dissimilarities by presmoothing functional data},
  author={Hitchcock, D. B. and Casella, G. and Booth, J. G.},
  journal={Journal of the American Statistical Association},
  volume={101},
  number={473},
  pages={211--222},
  year={2006},
  publisher={Taylor \& Francis}
}

@article{hitchcock2007effect,
  title={The effect of pre-smoothing functional data on cluster analysis},
  author={Hitchcock, D. B. and Booth, J. G. and Casella, G.},
  journal={Journal of Statistical Computation and Simulation},
  volume={77},
  number={12},
  pages={1043--1055},
  year={2007},
  publisher={Taylor \& Francis}
}

@article{izenman1975reduced,
  title={Reduced-rank regression for the multivariate linear model},
  author={Izenman, A. J.},
  journal={Journal of Multivariate Analysis},
  volume={5},
  number={2},
  pages={248--264},
  year={1975},
  publisher={Elsevier}
}

@book{izenman2008modern,
  title={Modern Multivariate Statistical Techniques},
  author={Izenman, A. J.},
  volume={1},
  year={2008},
  publisher={Springer}
}

@article{janssen2001efficiency,
  title={Efficiency of linear regression estimators based on presmoothing},
  author={Janssen, P. and Swanepoel, J. and Veraverbeke, N.},
  journal={Communications in Statistics-Theory and Methods},
  volume={30},
  number={10},
  pages={2079--2097},
  year={2001},
  publisher={Taylor \& Francis}
}

@book{johnson07:applied,
  title={Applied Multivariate Statistical Analysis},
  author={Johnson, R. A. and Wichern, D. W. and others},
  year={2007},
  edition = {6th},
  publisher={Prentice Hall Upper Saddle River, NJ}
}

@article{jolliffe2016principal,
  title={Principal component analysis: a review and recent developments},
  author={Jolliffe, I. T. and Cadima, J.},
  journal={Philosophical Transactions of the Royal Society A: Mathematical, Physical and Engineering Sciences},
  volume={374},
  number={2065},
  pages={20150202},
  year={2016},
  publisher={The Royal Society Publishing}
}

@article{kim2009multivariate,
  title={A multivariate regression approach to association analysis of a quantitative trait network},
  author={Kim, S. and Sohn, K.-A. and Xing, E. P.},
  journal={Bioinformatics},
  volume={25},
  number={12},
  pages={i204--i212},
  year={2009},
  publisher={Oxford University Press}
}

@article{liang2015assessing,
  title={Assessing Beijing's {PM2.5} pollution: severity, weather impact, {APEC} and winter heating},
  author={Liang, X. and Zou, T. and Guo, B. and Li, S. and Zhang, H. and Zhang, S. and Huang, H. and Chen, S. X.},
  journal={Proceedings of the Royal Society A: Mathematical, Physical and Engineering Sciences},
  volume={471},
  number={2182},
  pages={20150257},
  year={2015},
  publisher={The Royal Society Publishing}
}

@article{liu2014review,
  title={A review of multivariate analyses in imaging genetics},
  author={Liu, J. and Calhoun, V. D.},
  journal={Frontiers in Neuroinformatics},
  volume={8},
  pages={29},
  year={2014},
  publisher={Frontiers Media SA}
}

@article{ma2022multiple,
  title={Multiple Change Points Detection in High-Dimensional Multivariate Regression},
  author={Ma, X. and Zhou, Q. and Zi, X.},
  journal={Journal of Systems Science and Complexity},
  volume={35},
  number={6},
  pages={2278--2301},
  year={2022},
  publisher={Springer}
}

@book{mardia95,	
author = {Mardia, K. V. and Kent, J. T. and Bibby, J. M.},	
title = {Multivariate Analysis},	
publisher = {Academic Press},	
year = {1995},	
edition = {10th},
series = {Probability and Mathematical Statistics},
address = {San Diego}
}

@article{musta2022presmoothing,
  title={A presmoothing approach for estimation in the semiparametric {C}ox mixture cure model},
  author={Musta, E. and Patilea, V. and Van Keilegom, I.},
  journal={Bernoulli},
  volume={28},
  number={4},
  pages={2689--2715},
  year={2022},
  publisher={Bernoulli Society for Mathematical Statistics and Probability}
}

@article{musta2024regression,
  title={Regression estimation using surrogate responses obtained by presmoothing},
  author={Musta, E. and Patilea, V. and Van Keilegom, I.},
  journal={Statistica Neerlandica},
  year={2024},
  note = {(in press)},
  publisher={Wiley Online Library}
}

@article{neudecker1990asymptotic,
  title={The asymptotic variance matrix of the sample correlation matrix},
  author={Neudecker, H. and Wesselman, A. M.},
  journal={Linear Algebra and its Applications},
  volume={127},
  pages={589--599},
  year={1990},
  publisher={North-Holland}
}

@article{petrella2019joint,
  title={Joint estimation of conditional quantiles in multivariate linear regression models with an application to financial distress},
  author={Petrella, L. and Raponi, V.},
  journal={Journal of Multivariate Analysis},
  volume={173},
  pages={70--84},
  year={2019},
  publisher={Elsevier}
}

@article{pun2017long,
  title={Long-term {PM2.5} exposure and respiratory, cancer, and cardiovascular mortality in older {US} adults},
  author={Pun, V. C. and Kazemiparkouhi, F. and Manjourides, J. and Suh, H. H.},
  journal={American Journal of Epidemiology},
  volume={186},
  number={8},
  pages={961--969},
  year={2017},
  publisher={Oxford University Press}
}

@techreport{ratsimalahelo2001rank,
  title={Rank test based on matrix perturbation theory},
  author={Ratsimalahelo, Z.},
  year={2001},
  institution={EERI Research Paper Series}
}

@Manual{Rcore,
    title = {R: A Language and Environment for Statistical Computing},
    author = {{R Core Team}},
    organization = {R Foundation for Statistical Computing},
    address = {Vienna, Austria},
    year = {2024},
    url = {https://www.R-project.org/},
  }

@book{Reinsel1998,
   author = {Reinsel, G. and Velu, R. and Chen, K.},
   title = {Multivariate Reduced-Rank Regression: Theory and Applications},
   series = {Lecture Notes in Statistics},
   publisher = {Springer New York},
   year = {2022},
}

@article{sivakumaran2011abundant,
  title={Abundant pleiotropy in human complex diseases and traits},
  author={Sivakumaran, S. and Agakov, F. and Theodoratou, E. and Prendergast, J. G. and Zgaga, L. and Manolio, T. and Rudan, I. and McKeigue, P. and Wilson, J. F. and Campbell, H.},
  journal={The American Journal of Human Genetics},
  volume={89},
  number={5},
  pages={607--618},
  year={2011},
  publisher={Elsevier}
}

@article{skagerberg1992multivariate,
  title={Multivariate data analysis applied to low-density polyethylene reactors},
  author={Skagerberg, B. and MacGregor, J. F. and Kiparissides, C.},
  journal={Chemometrics and Intelligent Laboratory Systems},
  volume={14},
  number={1-3},
  pages={341--356},
  year={1992},
  publisher={Elsevier}
}

@article{srivastava2003predicting,
  title={Predicting multivariate response in linear regression model},
  author={Srivastava, M. S. and Solanky, T. K. S.},
  journal={Communications in Statistics-Simulation and Computation},
  volume={32},
  number={2},
  pages={389--409},
  year={2003},
  publisher={Taylor \& Francis}
}

@article{tedesco2025instrumental,
  title={Instrumental variable estimation of the proportional hazards model by presmoothing},
  author={Tedesco, L. and Beyhum, J. and Van Keilegom, I.},
  journal={Electronic Journal of Statistics},
  volume={19},
  number={1},
  pages={656--717},
  year={2025},
  publisher={The Institute of Mathematical Statistics and the Bernoulli Society}
}

@INPROCEEDINGS{tonks,
  author={Tonks, Z. and Sankaran, G. and Davenport, J.},
  booktitle={2017 19th International Symposium on Symbolic and Numeric Algorithms for Scientific Computing (SYNASC)}, 
  title={Fast Matrix Operations in Computer Algebra}, 
  year={2017},
  volume={},
  number={},
  pages={67-70}
}

@article{wille2004sparse,
  title={Sparse graphical {G}aussian modeling of the isoprenoid gene network in {A}rabidopsis thaliana},
  author={Wille, A. and Zimmermann, P. and Vranov{\'a}, E. and F{\"u}rholz, A. and Laule, O. and Bleuler, S. and Hennig, L. and Preli{\'c}, A. and von Rohr, P. and Thiele, L. and others},
  journal={Genome Biology},
  volume={5},
  pages={1--13},
  year={2004},
  publisher={Springer}
}

@article{zou2022estimation,
  title={Estimation of low rank high-dimensional multivariate linear models for multi-response data},
  author={Zou, C. and Ke, Y. and Zhang, W.},
  journal={Journal of the American Statistical Association},
  volume={117},
  number={538},
  pages={693--703},
  year={2022},
  publisher={Taylor \& Francis}
}

@article{massy1965principal,
  title={Principal components regression in exploratory statistical research},
  author={Massy, William F},
  journal={Journal of the American Statistical Association},
  volume={60},
  number={309},
  pages={234--256},
  year={1965},
  publisher={Taylor \& Francis}
}

@incollection{jolliffe2011principal,
  title={Principal component analysis},
  author={Jolliffe, Ian},
  booktitle={International encyclopedia of statistical science},
  pages={1094--1096},
  year={2011},
  publisher={Springer}
}

@inproceedings{james1961estimation,
  title={Estimation with quadratic loss},
  author={James, William and Stein, Charles and others},
  booktitle={Proceedings of the fourth Berkeley symposium on mathematical statistics and probability},
  volume={1},
  number={1961},
  pages={361--379},
  year={1961},
  organization={University of California Press}
}

@Book{mvtnorm2009,
  title = {Computation of Multivariate Normal and t Probabilities},
  author = {Alan Genz and Frank Bretz},
  series = {Lecture Notes in Statistics},
  year = {2009},
  publisher = {Springer-Verlag},
  address = {Heidelberg},
  isbn = {978-3-642-01688-2},
}

\newpage
\appendix
\section{Proof of theoretical results}
\label{app:proofs}

In this appendix, we provide details of the proof of Theorem \ref{theorem:lrpsdist} and Proposition \ref{prop:complexity} described in the main text.

\subsection{Proof of Theorem \ref{theorem:lrpsdist}}
\label{sec:appendix}


We begin by establishing a result which be useful in the proof of the theorem.  Let $Y$ be a random data matrix from model \eqref{eq:mrreg} (where the fourth order moments of the errors are finite).  Note that in what follows, we assume without loss of generality that $X$ and $Y$ are mean-centered, so that $S_y = n^{-1}Y^{\top}Y$ is  the sample covariance matrix, and define $\Sigma_{y}=n^{-1}(XB)^{\top}(XB)+\Sigma_{e}$.\\

\begin{lemma}
Let $S_y$ and $\Sigma_y$ be as above.  Then
\begin{equation}
\sqrt{n} \operatorname{vec}(S_y-\Sigma_y) \stackrel{D}{\rightarrow} \mathcal{N}_{q^2}\left(0, \mathbf{V}_y\right),
\end{equation}
where 
\begin{equation}\label{eq:Vy}
    \mathbf{V}_y=\mathbb{E}\left[e e^{\top} \otimes e e^{\top}\right]-\operatorname{vec}\left(\Sigma_e\right) \otimes \operatorname{vec}\left(\Sigma_e\right)^{\top}
\end{equation}
in which $e\in\mathbb{R}^{q\times1}$ denotes a generic row of $E$. 
\label{lemma1}
\end{lemma}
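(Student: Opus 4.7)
The plan is to prove the CLT by decomposing $S_y-\Sigma_y$ into a ``pure error'' contribution plus a signal-noise cross term, and applying the classical multivariate central limit theorem to the former. Substituting $Y=XB+E$ into $S_y=n^{-1}Y^{\top}Y$ yields
\begin{equation*}
S_y-\Sigma_y \;=\; \frac{1}{n}\sum_{i=1}^{n}\bigl(e_{i}e_{i}^{\top}-\Sigma_{e}\bigr) \;+\; \frac{1}{n}\bigl[(XB)^{\top}E + E^{\top}(XB)\bigr].
\end{equation*}
The first term is an iid sum of centred outer products of the rows of $E$, while the second (cross) term has mean zero but mixes signal and noise through the fixed design.

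For the first term, the summands $Z_{i}:=e_{i}e_{i}^{\top}-\Sigma_{e}$ are iid and mean-zero, with finite second moments guaranteed by the finite fourth-moment assumption on the errors. Using $\vect(e_{i}e_{i}^{\top})=e_{i}\otimes e_{i}$ and the Kronecker identity $(a\otimes a)(a\otimes a)^{\top}=(aa^{\top})\otimes(aa^{\top})$, a direct calculation gives
\begin{equation*}
\mathrm{Cov}(\vect(Z_{i})) \;=\; \mathbb{E}\bigl[e_{i}e_{i}^{\top}\otimes e_{i}e_{i}^{\top}\bigr] - \vect(\Sigma_{e})\vect(\Sigma_{e})^{\top} \;=\; \mathbf{V}_{y},
\end{equation*}
so the classical multivariate Lindeberg--L\'{e}vy CLT delivers $n^{-1/2}\sum_{i}\vect(Z_{i})\stackrel{D}{\to}\mathcal{N}_{q^{2}}(0,\mathbf{V}_{y})$, which is the stated limit.

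The remaining step is to argue that the cross term, scaled by $\sqrt{n}$, does not perturb this limit, so that Slutsky's theorem closes the argument. I expect this to be the main obstacle: a naive variance computation gives $\mathrm{Cov}(\vect(n^{-1/2}(XB)^{\top}E)) = n^{-1}\,\Sigma_{e}\otimes(XB)^{\top}(XB)$, which under $n^{-1}X^{\top}X\to\Sigma_{X}$ converges to the generically nonzero limit $\Sigma_{e}\otimes B^{\top}\Sigma_{X}B$, so asymptotic vanishing does not follow from a direct order-of-magnitude bound. A careful treatment---whether by a Lindeberg--Feller argument that identifies the iid error term as the leading contribution, by exploiting structural cancellation when the lemma is inserted into the eigendecomposition-based proof of Theorem \ref{theorem:lrpsdist} (so that the cross term's contribution to $\hat{V}_{k}$ is absorbed into the $\Sigma_{X}^{-1}$ factor of the final covariance), or by an explicit re-statement that isolates the leading $\mathbf{V}_{y}$ term---will be needed to conclude. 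Once this step is accepted, the remaining bookkeeping (verifying Lindeberg conditions, vectorisation identities, continuous mapping) is routine.
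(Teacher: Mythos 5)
Your decomposition is precisely the one the paper uses: writing $W=n^{-1}(XB)^{\top}E$ and $S_e=n^{-1}E^{\top}E$, both arguments reduce the claim to $\sqrt{n}\operatorname{vec}(S_y-\Sigma_y)=\sqrt{n}\operatorname{vec}(W+W^{\top})+\sqrt{n}\operatorname{vec}(S_e-\Sigma_e)$, and your Lindeberg--L\'{e}vy treatment of the iid error part, with limit covariance $\mathbb{E}[ee^{\top}\otimes ee^{\top}]-\operatorname{vec}(\Sigma_e)\operatorname{vec}(\Sigma_e)^{\top}$, is exactly the content of the Neudecker--Wesselman theorem that the paper imports for that step. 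Up to the cross term, you have reproduced the paper's proof.

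The step you could not close is a genuine gap in your write-up, but your diagnosis of it is correct, and it is also the point at which the paper's own argument breaks down. Your variance computation is right: $\operatorname{Cov}\bigl(\operatorname{vec}(n^{-1/2}(XB)^{\top}E)\bigr)=\Sigma_e\otimes B^{\top}S_XB\rightarrow\Sigma_e\otimes B^{\top}\Sigma_XB$. Indeed $\sqrt{n}\operatorname{vec}(W)=n^{-1/2}\sum_{i=1}^{n}e_i\otimes(B^{\top}x_i)$ is a normalised sum of independent mean-zero vectors and converges by Lindeberg--Feller to a nondegenerate Gaussian whenever $XB\neq 0$, so no refinement of the CLT argument, and no cancellation available inside this lemma, will make it vanish. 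The paper instead shows $\operatorname{Var}(\operatorname{vec}(W))=n^{-1}\Sigma_e\otimes B^{\top}S_XB\rightarrow 0$, concludes $\operatorname{vec}(W+W^{\top})\stackrel{\mathbb{P}}{\rightarrow}0$, and invokes Slutsky; but the quantity appearing in the decomposition carries the factor $\sqrt{n}$, and $\operatorname{Var}(\sqrt{n}\operatorname{vec}(W))$ does not tend to zero. As stated, the limiting covariance should contain an additional contribution of the form $(\Sigma_e\otimes B^{\top}\Sigma_XB)(I+K^{\top})+(B^{\top}\Sigma_XB\otimes\Sigma_e)(I+K)$, where $K$ is the commutation matrix, together with a third-moment cross-covariance that vanishes for symmetric error distributions; the lemma as written holds only when $B^{\top}\Sigma_XB=0$. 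So your proposal is incomplete, but the missing step cannot be completed in the form the lemma asserts, and you were right not to paper over it.
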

\begin{proof}

Denote by $S_{e}=n^{-1}E^{\top}E$ the sample covariance of the error vectors $E$ with corresponding population covariance matrix $\Sigma_e$. Since the fourth order moments of the error distribution are assumed finite and the rows of $E$ are assumed to be independent and identically distributed under model \eqref{eq:mrreg}, Theorem 1 in \cite{neudecker1990asymptotic} (with zero mean random variables) establishes that $S_e$ is a root-$n$ consistent estimator for $\Sigma_e$:
\begin{equation} \label{eq:asyvarSe}
\sqrt{n} \operatorname{vec}(S_e-\Sigma_e) \stackrel{D}{\rightarrow} \mathcal{N}_{q^2}\left(0, \mathbf{V}_e\right),
\end{equation}
where 
\begin{equation*}
    \mathbf{V}_e = \mathbb{E}\left[e e^{\top} \otimes e e^{\top}\right] - \operatorname{vec} (\Sigma_e) \otimes \operatorname{vec} (\Sigma_e)^{\top}
\end{equation*}
and $e$ is as above.\\

Let $Z := n^{-1}(XB)^{\top}(XB) = n^{-1}B^{\top}(X^{\top}X)B \in\mathbb{R}^{q\times q}$ and $W := n^{-1}(XB)^{\top}E \in\mathbb{R}^{q\times q}$.

Using the linear model \eqref{eq:mrreg}, the sample covariance matrix of $Y$, $S_y$, can be expressed as
\begin{equation}\label{eq:scovexp}
S_{y} =n^{-1}Y^{\top}Y =n^{-1}\Bigl((XB)^{\top}(XB)+(XB)^{\top}E+E^{\top}(XB)+E^{\top}E\Bigr).
\end{equation}

Then $$\Sigma_{y}:=\mathbb{E}(S_y)
  =n^{-1}\Bigl((XB)^{\top}(XB)\Bigr)+n^{-1}\mathbb{E}\Bigl((XB)^{\top}E\Bigr)+n^{-1}\mathbb{E}\Bigl(E^{\top}(XB)\Bigr)+\Sigma_{e}.$$

Because $\mathbb{E}(E)=0$ (and thus $\mathbb{E}(W)=0$), this simplifies to 
\begin{equation}\label{eq:Sigy}
\Sigma_{y}=n^{-1}(XB)^{\top}(XB)+\Sigma_{e}.
\end{equation}

Then combining \eqref{eq:scovexp} and \eqref{eq:Sigy}, we get
\begin{eqnarray}\nonumber
\underbrace{\sqrt{n}\operatorname{vec}(S_{y}-\Sigma_{y})}_{C} 
& = &\sqrt{n}\operatorname{vec}\left((Z + W + W^{\top} + S_e) - \Sigma_y\right)\\\nonumber
& = &\sqrt{n}\operatorname{vec}\left(( Z + W + W^{\top} + S_e) - \left(Z + \Sigma_e\right)\right)\\
& = & \underbrace{\sqrt{n}\operatorname{vec}\left( 
( W + W^{\top})\right)}_{(A_1)}+\underbrace{\sqrt{n}\operatorname{vec}\left(S_e - \Sigma_e \right)}_{(A_2)}.\label{eq:Wrem}
\end{eqnarray}

Using the properties of the vec operator \citep{abadir},
$$
\operatorname{vec}\left(W\right) = \operatorname{vec}\left(n^{-1}\nonumber(XB)^{\top}E \right) = \underbrace{\left( I_q \otimes n^{-1}(XB)^{\top} \right)}_A \operatorname{vec}(E),$$

where $\otimes$ denotes the Kronecker product, so
$$
\begin{aligned}
\operatorname{Var} \left(\operatorname{vec}(W) \right) & =\operatorname{Var} \left(A \operatorname{vec}(E)\right) = A \operatorname{Var}\left(\operatorname{vec}(E)\right)A^{\top} \\
&= \left( I_q \otimes n^{-1}(XB)^{\top} \right) (\Sigma_e \otimes I_n)\left( I_q \otimes n^{-1}(XB)^{\top} \right)^{\top}\\
&= \left( I_q \Sigma_e I_q \right)\otimes\left(n^{-2} (XB)^{\top} I_n (XB)\right)\\
&= \Sigma_e \otimes n^{-1}Z,
\end{aligned}
$$
using the transpose and mixed-product properties of the Kronecker product.  Assuming that  (with mean-centered $X$) \ $\lim_{n\rightarrow\infty}S_{X}=\Sigma_{x}$ (where $S_{X}=n^{-1}X^{\top}X$) for some fixed positive definite matrix $\Sigma_x$ (as is customary in regression), 
we have
$$
\begin{aligned}
\mathbb{E}\left(\operatorname{vec}(W)\right) &= 0 \quad \textrm{and}\\
\operatorname{Var} \left(\operatorname{vec}(W) \right) & =\Sigma_e \otimes n^{-1}Z =\left(\Sigma_e \otimes n^{-1}B^{\top} S_{X}B\right) \rightarrow 0.
\end{aligned}
$$
Hence considering the first term $A_1$ in \eqref{eq:Wrem}, we have
$$\operatorname{vec}\left(( W + W^{\top})\right)=\left(\operatorname{vec}(W)\right)+\left(\operatorname{vec}(W^{\top})\right)\stackrel{\mathbb{P}}{\rightarrow} 0$$


thus considering \eqref{eq:Wrem} and using (the multivariate version of) Slutsky's theorem (see e.g., \cite[
Section 20.6]{cramer1999mathematical}), we have

$$
\begin{aligned}
\sqrt{n}\operatorname{vec}(S_{y}-\Sigma_{y})  \stackrel{D}{\rightarrow} \mathcal{N}_{q^2 }(0,\mathbf{V}_y),
\end{aligned}
$$
with 
   $$ \mathbf{V}_y=\mathbb{E}\left[e e^{\top} \otimes e e^{\top}\right]-\operatorname{vec}\left(\Sigma_e\right) \otimes \operatorname{vec}\left(\Sigma_e\right)^{\top}.$$
\end{proof}

\bigskip
We now present a theorem and a corollary without proof, both taken from \cite{bura2008distribution}, which are used as intermediary results to establish Theorem \ref{theorem:lrpsdist}.

\begin{unntheorem}
Let $\hat{\Omega}$ be a $p\times q$-dimensional random matrix that is a function of a random sample of $p$-dimensional data ${Z_1,\dots,Z_n}$. Assume that $\hat{\Omega}_n$ is asymptotically normally distributed as 
$$\sqrt{n}\operatorname{vec}(\hat{\Omega}_n-\Omega)  \stackrel{D}{\rightarrow} \mathcal{N}_{pq}(0,\mathbf{V}).$$

Let $r = rank(\Omega)$ with $d \leq \min(p,q)$, and let the singular value decomposition of $\Omega$ be 
$$\Omega=
L^{\top}\left(\begin{array}{ll}
D & 0 \\
\mathbf{0} & 0
\end{array}\right) R,
$$
where $L^{\top}=\left(L_1, L_0\right)$ is of order $p \times p$ with $L_1: p \times r$, $L_0: p \times(p-r)$. $R^{\top}=\left(R_1, R_0\right)$ is orthogonal with $R_1: q \times r, R_0: q \times(q-r)$. The $r$ left singular vectors $U_1$ of $\Omega$  correspond to its $r$ non-zero singular values, $d_1\geq \dots\geq d_r$. Analagously, the SVD of $\hat{\Omega}$ is
\begin{equation}\label{eq:omhat}
\hat{\Omega}=\hat{L}^{\top}\left(\begin{array}{cc}\hat{D}_1 & \mathbf{0} \\
& \\
\mathbf{0} & \hat{D}_0
\end{array}\right) \hat{R},
\end{equation}
with $\hat{L}^{\top}=\left(\hat{L}_1, \hat{L}_0\right), \hat{R}^{\top}=\left(\hat{R}_1, \hat{R}_0\right)$, and the partition conforms with the partition of $\Omega$.

Assume that $\lambda_r > \epsilon > 0$, i.e.\ the minimum positive singular value of $\Omega$ is well separated from zero. Then as $n\rightarrow \infty$
\begin{enumerate}[(a)]
\item $\hat{D}_1 \stackrel{\mathbb{P}}{\rightarrow} D_1$ and $\hat{D}_0 \stackrel{\mathbb{P}}{\rightarrow} 0$.
\item $\hat{R}_1 \stackrel{\mathbb{P}}{\rightarrow} R_1$ and $\hat{L}_1 \stackrel{\mathbb{P}}{\rightarrow} L_1$.
\end{enumerate}\label{BPtheorem}
\end{unntheorem}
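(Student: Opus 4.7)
The plan is to reduce both assertions to the stated CLT combined with two classical matrix perturbation inequalities (Weyl for singular values, Wedin for singular subspaces), once one first upgrades the asymptotic normality to consistency of $\hat{\Omega}_n$ in operator norm.

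\textbf{Step 1: Consistency of $\hat{\Omega}_n$.} The assumption $\sqrt{n}\operatorname{vec}(\hat{\Omega}_n-\Omega)\stackrel{D}{\rightarrow}\mathcal{N}_{pq}(0,\mathbf{V})$ immediately implies $\operatorname{vec}(\hat{\Omega}_n-\Omega)=O_{\mathbb{P}}(n^{-1/2})$, so $\|\hat{\Omega}_n-\Omega\|_{\mathrm{F}}\stackrel{\mathbb{P}}{\rightarrow}0$, and hence $\|\hat{\Omega}_n-\Omega\|_{2}\stackrel{\mathbb{P}}{\rightarrow}0$, where $\|\cdot\|_{2}$ denotes the spectral norm. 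This single fact drives the rest of the argument.

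\textbf{Step 2: Proof of (a).} Singular values are $1$-Lipschitz in the spectral norm. Specifically, Weyl's perturbation inequality gives, for every $i=1,\ldots,\min(p,q)$,
\[
|\hat{d}_{i}-d_{i}|\le\|\hat{\Omega}_n-\Omega\|_{2}.
\]
Applied for $i=1,\ldots,r$, this yields $\hat{d}_{i}\stackrel{\mathbb{P}}{\rightarrow}d_{i}$, so $\hat{D}_{1}\stackrel{\mathbb{P}}{\rightarrow}D_{1}$. Applied for $i=r+1,\ldots,\min(p,q)$, where $d_{i}=0$, it gives $\hat{d}_{i}\stackrel{\mathbb{P}}{\rightarrow}0$, so $\hat{D}_{0}\stackrel{\mathbb{P}}{\rightarrow}0$.

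\textbf{Step 3: Proof of (b).} Singular vectors are not continuous functionals of the matrix (they are defined only up to sign, and are rotationally ambiguous in multiplicity blocks), so the natural reading of $\hat{R}_{1}\stackrel{\mathbb{P}}{\rightarrow}R_{1}$ and $\hat{L}_{1}\stackrel{\mathbb{P}}{\rightarrow}L_{1}$ is convergence of the associated orthogonal projectors, $\hat{R}_{1}\hat{R}_{1}^{\top}\stackrel{\mathbb{P}}{\rightarrow}R_{1}R_{1}^{\top}$ and $\hat{L}_{1}\hat{L}_{1}^{\top}\stackrel{\mathbb{P}}{\rightarrow}L_{1}L_{1}^{\top}$ (individual columns then converge up to a unitary transformation inside each eigenspace). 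The vehicle is Wedin's $\sin\Theta$ theorem: if $\delta>0$ separates the retained singular values of $\Omega$ from the remaining singular values of $\hat{\Omega}_n$, then
\[
\|\hat{R}_{1}\hat{R}_{1}^{\top}-R_{1}R_{1}^{\top}\|_{2}\le\frac{\|\hat{\Omega}_n-\Omega\|_{2}}{\delta},
\]
and the analogous bound holds for the left singular subspaces. It remains to produce $\delta$. The hypothesis $\lambda_{r}>\epsilon>0$ bounds the smallest nonzero singular value of $\Omega$ away from zero; combining this with Step~2 shows that with probability tending to one, $\hat{d}_{r}>3\epsilon/4$ and $\hat{d}_{r+1}<\epsilon/4$, so one may take $\delta\ge\epsilon/2$ eventually. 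Plugging this into the Wedin bound together with $\|\hat{\Omega}_n-\Omega\|_{2}\stackrel{\mathbb{P}}{\rightarrow}0$ from Step~1 gives the claimed convergence of projectors.

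\textbf{Main obstacle.} The subtle part is (b): because singular vectors are identifiable only modulo sign and block-rotations, convergence must be formulated at the level of subspaces (projectors), not entrywise. The well-separation hypothesis $\lambda_{r}>\epsilon>0$ is exactly what allows the signal subspace $\operatorname{range}(R_{1})$ to be disentangled asymptotically from the orthogonal complement, supplying the non-vanishing Wedin gap; without it, the perturbation bound degenerates and singular vectors may not stabilise even when the underlying matrix does.
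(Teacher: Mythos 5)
Your argument is correct, but note that the paper does not prove this statement at all: it is quoted verbatim ``without proof'' from Bura and Pfeiffer (2008), whose own derivation (leaning on Ratsimalahelo's rank-estimation work) proceeds via a perturbation expansion of the SVD that simultaneously delivers the asymptotic \emph{distribution} of $\hat{L}_1$ stated in the accompanying corollary. Your route is different and more elementary: Weyl's inequality for part (a) and Wedin's $\sin\Theta$ theorem for part (b), with the hypothesis $\lambda_r>\epsilon>0$ correctly identified as the source of the non-vanishing spectral gap. Two remarks on the comparison. First, your proof only uses $\|\hat{\Omega}_n-\Omega\|_2\stackrel{\mathbb{P}}{\rightarrow}0$, not the full $\sqrt{n}$-normality; this is in fact consistent with the paper's own footnote observing that root-$n$ consistency is all that is needed for the consistency claims, so your argument covers the independent-but-not-identically-distributed regression setting without further comment. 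Second, your reformulation of (b) as convergence of the projectors $\hat{R}_1\hat{R}_1^{\top}$ and $\hat{L}_1\hat{L}_1^{\top}$ is the honest reading (entrywise convergence of singular vectors requires a sign and block-rotation convention that the statement leaves implicit), and it is exactly the form in which the result is consumed downstream, since the LRPS analysis only ever uses $\hat{V}_k\hat{V}_k^{\top}\stackrel{\mathbb{P}}{\rightarrow}V_kV_k^{\top}$. The one thing your approach does not buy is the corollary: Weyl--Wedin gives consistency but no central limit theorem for the singular vectors, so the paper would still need the Bura--Pfeiffer machinery for the asymptotic covariance $\mathbf{\Sigma}_L$.
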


\begin{unncoro}
The matrix $\hat{L}_1$ of left singular vectors in \eqref{eq:omhat} is asymptotically normal with
\begin{equation}\label{eq:BPasydist}
n^{1/2}\operatorname{vec}(\hat{L}_1-L_1)\stackrel{D}{\rightarrow} N_{pd}\left(\mathbf{0}, \mathbf{\Sigma}_L\right), 
\end{equation}
where 
\begin{equation}\label{eq:BPasyvar}
\mathbf{\Sigma}_L =\left(D^{-1} R_1^{\top} \otimes I_p\right) \mathbf{V}\left(R_1 D^{-1} \otimes I_p\right).
\end{equation}
\end{unncoro}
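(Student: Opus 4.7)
The plan is to prove the corollary by a delta-method argument applied to the SVD functional, using the consistency of $\hat{L}_1,\hat{D}_1,\hat{R}_1$ established in the preceding theorem together with the assumed asymptotic normality of $\hat{\Omega}$. The starting point is the SVD identity $\hat{\Omega}\hat{R}_1 = \hat{L}_1\hat{D}_1$, which, upon inverting $\hat{D}_1$ (whose nonzero singular values are bounded away from zero by assumption), yields the exact representation $\hat{L}_1 = \hat{\Omega}\hat{R}_1\hat{D}_1^{-1}$. The analogous population identity reads $L_1 = \Omega R_1 D^{-1}$, using the rank-$r$ factorisation $\Omega = L_1 D R_1^{\top}$.

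Next I would produce the fundamental decomposition
\begin{equation*}
\hat{L}_1 - L_1 \;=\; (\hat{\Omega} - \Omega)\hat{R}_1\hat{D}_1^{-1} \;+\; \Omega\bigl(\hat{R}_1\hat{D}_1^{-1} - R_1 D^{-1}\bigr),
\end{equation*}
and reduce the first term to $(\hat{\Omega} - \Omega)R_1 D^{-1} + o_p(n^{-1/2})$ via Slutsky's theorem and the convergences $\hat{R}_1 \stackrel{\mathbb{P}}{\to} R_1$, $\hat{D}_1 \stackrel{\mathbb{P}}{\to} D$. For the second term, I would substitute $\Omega = L_1 D R_1^{\top}$ and use SVD perturbation theory, writing the first-order perturbations of $\hat{U}=\hat{L}^{\top}$ and $\hat{V}=\hat{R}^{\top}$ in the block form $\delta\hat{U} = UA$, $\delta\hat{V} = VB$ with antisymmetric $A,B$, and identifying the off-diagonal block $A_{21} = L_0^{\top}(\hat{\Omega}-\Omega)R_1 D^{-1}$ together with the diagonal-block constraints enforced by antisymmetry. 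These algebraic identities, combined with the orthogonality $\hat{R}_1^{\top}\hat{R}_1 = I_r$, yield the leading-order expansion
\begin{equation*}
\sqrt{n}\bigl(\hat{L}_1 - L_1\bigr) \;=\; \sqrt{n}(\hat{\Omega} - \Omega)R_1 D^{-1} \;+\; o_p(1).
\end{equation*}

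To complete the proof I would apply the vec operator, using the identity $\operatorname{vec}(ABC) = (C^{\top}\otimes A)\operatorname{vec}(B)$ with $A = I_p$, $B = \hat{\Omega}-\Omega$, and $C = R_1 D^{-1}$, to obtain
\begin{equation*}
\sqrt{n}\operatorname{vec}\bigl((\hat{\Omega}-\Omega)R_1 D^{-1}\bigr) \;=\; \bigl(D^{-1}R_1^{\top}\otimes I_p\bigr)\sqrt{n}\operatorname{vec}(\hat{\Omega}-\Omega).
\end{equation*}
Combining this with the assumed $\sqrt{n}\operatorname{vec}(\hat{\Omega}-\Omega)\stackrel{D}{\to}\mathcal{N}_{pq}(0,\mathbf{V})$ and the standard fact that a linear map $M\mapsto AMA^{\top}$ applied to a centred Gaussian gives covariance $AVA^{\top}$, Slutsky's theorem delivers the claim with covariance $(D^{-1}R_1^{\top}\otimes I_p)\mathbf{V}(R_1 D^{-1}\otimes I_p)$.

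The hard part will be Step 2, namely showing rigorously that the remainder $\Omega(\hat{R}_1\hat{D}_1^{-1} - R_1 D^{-1})$ collapses into the claimed leading-order expression. A naive expansion suggests that this term is only of order $O_p(n^{-1/2})$, but the orthogonality constraints on $\hat{L}$ and $\hat{R}$ impose antisymmetry on the leading-order perturbations of $R_1$ and, combined with the rank-$r$ structure $\Omega = L_1 D R_1^{\top}$, force precisely the cancellations needed so that its contribution coincides (up to $o_p(n^{-1/2})$) with the $L_1$-parallel component of $(\hat{\Omega}-\Omega)R_1 D^{-1}$. This perturbation calculation, rather than the subsequent Slutsky and vec manipulations, is the technical heart of the result.
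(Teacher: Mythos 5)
First, note that the paper gives no proof of this corollary: it is quoted, explicitly without proof, from Bura and Pfeiffer (2008), so what you are reconstructing is the argument behind that citation rather than anything in the paper itself. Your Step 1 is fine (the identity $\hat L_1=\hat\Omega\hat R_1\hat D_1^{-1}$, the two-term decomposition, the Slutsky reduction of the first term, and the vec/Kronecker algebra). The gap is exactly the one you flag as the ``technical heart'', and it cannot be closed: the remainder $\Omega(\hat R_1\hat D_1^{-1}-R_1D^{-1})$ is genuinely $O_p(n^{-1/2})$, lives entirely in $\operatorname{col}(L_1)$, and does not cancel into the claimed leading term. The quickest way to see this is via the orthonormality constraint you yourself invoke: $\hat L_1^{\top}\hat L_1=I_r$ forces $L_1^{\top}(\hat L_1-L_1)$ to be antisymmetric up to $O_p(n^{-1})$, whereas $L_1^{\top}(\hat\Omega-\Omega)R_1D^{-1}$ is a generic $r\times r$ matrix with a nonvanishing symmetric part; hence the expansion $\hat L_1-L_1=(\hat\Omega-\Omega)R_1D^{-1}+o_p(n^{-1/2})$ is impossible in general. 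Concretely, first-order perturbation of the eigenproblem for $\hat\Omega\hat\Omega^{\top}$ gives, with $\Delta=\hat\Omega-\Omega$,
\begin{equation*}
\hat u_i-u_i=\sum_{j>r}\frac{u_j^{\top}\Delta v_i}{d_i}\,u_j\;+\;\sum_{j\le r,\;j\ne i}\frac{d_i\,u_j^{\top}\Delta v_i+d_j\,v_j^{\top}\Delta^{\top}u_i}{d_i^2-d_j^2}\,u_j\;+\;O_p(n^{-1}),
\end{equation*}
so only the $L_0$-components of $\Delta R_1D^{-1}$ survive intact: the component along $u_i$ is $O_p(n^{-1})$ rather than $u_i^{\top}\Delta v_i/d_i$, and the components along $u_j$ for $j\le r$, $j\ne i$ carry the extra term $d_j v_j^{\top}\Delta^{\top}u_i$ and the gap $d_i^2-d_j^2$ in place of $d_i^2$. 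The antisymmetry of the block $A_{11}$ in your own parametrisation is precisely what rules out the claimed collapse.

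The consequence is worse than a missing lemma: the covariance you would obtain from a correct perturbation expansion differs from $\mathbf{\Sigma}_L$ as stated. A two-by-two check makes this explicit: take $p=q=r=2$, $\Omega=\operatorname{diag}(2,1)$ and $\sqrt{n}\operatorname{vec}(\hat\Omega-\Omega)\stackrel{D}{\rightarrow}N(0,I_4)$. The stated formula gives $\mathbf{\Sigma}_L=D^{-2}\otimes I_2$, i.e.\ asymptotic variance $1/4$ for both coordinates of $\sqrt{n}(\hat u_1-u_1)$, whereas the expansion above gives $0$ along $u_1$ and $(4\cdot1+1\cdot1)/(2^2-1^2)^2=5/9$ along $u_2$. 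In other words, $\mathbf{\Sigma}_L$ is exactly the pushforward of $\mathbf{V}$ under the naive linearisation $\Delta\mapsto\Delta R_1D^{-1}$, which is not the derivative of the map $\Omega\mapsto L_1(\Omega)$; your instinct that Step 2 is where the difficulty sits is right, but the resolution is that the required cancellation provably fails rather than that it can be established by a more careful perturbation computation. If you want a rigorous version you must carry the antisymmetric $L_1$-block through to the covariance (changing $\mathbf{\Sigma}_L$), or restrict the claim to functionals that are insensitive to the $\operatorname{col}(L_1)$ directions, such as the projector $\hat L_1\hat L_1^{\top}$ in the well-separated-spectrum case or the $L_0^{\top}\hat L_1$ components used in rank testing.
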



The assumption on the minimum positive singular value ensures that $\Lambda^{-1}$ is bounded, necessary for the proof of the theorem (see \cite{bura2008distribution}).

\bigskip

We now apply the theorem and corollary to our setting.  

\bigskip

Let $V \Lambda V^{\top}$ be the compact form SVD of the quantity $\Sigma_y$ from Lemma \ref{lemma1} (equivalent to its eigendecomposition since it is positive semi-definite), and suppose $\Sigma_y$ has rank $r$. Suppose also that the random matrix $Y$ has the compact form decomposition $Y = \hat{U} \hat{D} \hat{V}^{\top}$, where the diagonal entries of $\hat{D}$ are $\{\hat{d}_1,\dots,\hat{d}_r\}$. Then
$S_y$ can be written as 
$$
    n^{-1}Y^{\top}Y = n^{-1} \hat{V} \hat{D} \hat{U}^{\top} \hat{V} \hat{D} \hat{U}^{\top} \\
     = \hat{V} \hat{\Lambda} \hat{V}^{\top}, 
$$
due to the orthogonality of $\hat{U}$, and where $\hat{\Lambda}=\operatorname{diag}\left(\hat{\lambda}_1, \hat{\lambda}_2, \ldots, \hat{\lambda}_r\right)$ is a diagonal matrix containing the descending eigenvalues $\hat\lambda_i=\hat d_i^2 /n, i=1,\dots,r$.  

Now using the result established in Lemma \ref{lemma1}, the asymptotic normality assumption of the theorem is satisfied for $S_y$ as an estimator of $\Sigma_y$\footnote{Note that the setup for the theorem in \cite{bura2008distribution} assumes independent and identically distributed data ${Z_1,\dots,Z_n}$.  However, the arguments in the proof, predominantly borrowed from \cite{ratsimalahelo2001rank}, only require the root-$n$ consistency of $\hat{\Omega}$ as an estimator for $\Omega$.  Thus the theorem applies under the more general setting of independent, but non-identically distributed data as in the regression model \eqref{eq:mrreg}.}, hence $\hat{\Lambda} \stackrel{\mathbb{P}}{\rightarrow}\Lambda$, and the corollary gives that
the matrix $\hat{V}$ of eigenvectors of $S_y$ are asymptotically normal with
\begin{equation*}
n^{1 / 2} \operatorname{vec}\left(\hat{V}-V\right) \stackrel{D}{\rightarrow} N_{qr}\left(\mathbf{0}, \mathbf{\Sigma}_{\hat{V}}\right)
\quad \textrm{where} 
\quad
\mathbf{\Sigma}_{\hat{V}} =\left(\Lambda^{-1} V^{\top} \otimes I_q\right) \mathbf{V}_y \left(V \Lambda^{-1} \otimes I_q\right).
\end{equation*}

Hence considering the subset of first $k$ eigenvectors and corresponding eigenvalues ($k<r$), we have that
\begin{equation}\label{eq:asydistvk}
n^{1 / 2} \operatorname{vec}\left(\hat{V}_k-V_k\right) \stackrel{D}{\rightarrow} N_{qk}\left(\mathbf{0}, \mathbf{\Sigma}_{\hat{V}_k}\right),
\end{equation}
with
\begin{equation*}\label{eq:asyvarvk}
\mathbf{\Sigma}_{\hat{V}_k} =\left({\Lambda_k}^{-1} {V_k}^{\top} \otimes I_q\right) \mathbf{V}_y \left(V_k {\Lambda_k}^{-1} \otimes I_q\right),
\end{equation*}
where $\hat V_k$, $V_k$ and $\Lambda_k$ correspond to the quantities above, curtailed to $k$ eigenvectors / eigenvalues.

\bigskip\bigskip

In view of the relationship between the LRPS estimator and the OLS estimator $\hat{B}\hat{V}_k{\hat{V}_k}^{\top}$, we have 
$
\operatorname{vec}\left(\hat{B}\hat{V}_k{\hat{V}_k}^{\top}\right) = \left(\hat{V}_k{\hat{V}_k}^{\top}\otimes I_p \right) \operatorname{vec}(\hat{B})$, and the corresponding expression with the true eigenvectors also holds.\\

Now considering the quantity $\operatorname {vec}\left(\hat{V}_k{\hat{V}_k}^{\top}\right)$, due to the consistency result for $\hat{V_k}$ in \eqref{eq:asydistvk}, the continuous mapping theorem (see e.g. \cite{billingsley2013convergence}) then means that $\hat{V}_k{\hat{V}_k}^{\top}$ is a consistent estimator of $V_k V_k^{\top}$, i.e. $\hat{V}_k{\hat{V}_k}^{\top} \stackrel{\mathbb{P}}{\rightarrow} V_k V_k^{\top}$ (see also the proof of Theorem 1 in \cite{bura2008distribution}). Using the distributional property \eqref{eq:olsdist} and applying Slutsky's theorem gives that 



\begin{eqnarray*}
    n^{1 / 2} \operatorname{vec}\left(\tilde{B}- BV_k{V_k}^{\top}\right) &=& n^{1 / 2} \operatorname{vec}\left(\hat{B}(\hat{V}_k{\hat{V}_k}^{\top}-V_k{V_k}^{\top})+(\hat{B}- B)V_k{V_k}^{\top}\right)\\
    &\stackrel{D}{\rightarrow}& \mathcal{N}\left(\mathbf{0}, V_kV_k^{\top} \Sigma_e V_kV_k^{\top} \otimes S_X^{-1} \right),
\end{eqnarray*}

where the variance term is derived as 
\begin{eqnarray*}
\operatorname{Var}\left(\left(V_k V_k^{\top}\otimes I_p\right)\operatorname{vec}(n^{1/2}\hat{B})\right)&=&\left(V_k V_k^{\top}\otimes I_p\right)n\operatorname{Var}\left(\operatorname{vec}(\hat{B})\right )\left(V_k V_k^{\top}\otimes I_p\right)^{\top}\\
&=&\left(V_k V_k^{\top}\otimes I_p\right)\left(\Sigma_e \otimes S_X^{-1}\right)\left(V_k V_k^{\top}\otimes I_p\right)\\
&=&V_kV_k^{\top} \Sigma_e V_kV_k^{\top} \otimes S_X^{-1},
\end{eqnarray*}
using the mixed-product property of Kronecker products.\\

Note that this result will also be valid if the distributional result \eqref{eq:olsdist} is asymptotic, as in the case when the error distribution $G$ in the regression model \eqref{eq:mrreg} is not Gaussian.

\subsection{Proof of Proposition \ref{prop:complexity}}
\label{sec:appendixprop}



The optimal computational complexity of the LRPS estimator will be different depending on whether $q > p$ or $p > q$, as seen below.

\subsubsection*{The case when $p > q$}
The LRPS estimator can be computed as follows:
$$
\tilde{B}
=
\underbrace{\underbrace{\underbrace{(X^{\top} X)^{-1}}_{O(np^2+p^3)}
\;\;
\underbrace{X^{\top} U_k}_{O(npk)}}_{O(p^2k)}
\;\;
\underbrace{D_k V_k^{\top}}_{O(qk)}}_{O(pqk)},
$$
After performing the truncated SVD of $Y$, the operations required to compute the LRPS estimator are: projection, low-rank reconstruction, and two matrix multiplications, contributing $O(npk)$, $O(qk)$, $O(p^2k)$ and $O(pqk)$ respectively, where we note that the matrix multiplication with $D_k$ can be done efficiently since it is diagonal.  This leads to a total complexity of
$$
O(np^2+p^3) + O(nqk) + O(npk) + O(qk) + O(p^2k) + O(pqk).
$$

\subsubsection*{The case when $q > p$}
Similarly, in this case the LRPS estimator can be calculated as:
$$
\begin{aligned}
\tilde{B} &=
\underbrace{\underbrace{\underbrace{\underbrace{(X^{\top} X)^{-1}}_{O(np^2+p^3)}
\;\;
\underbrace{X^{\top} U_k}_{O(npk)}}_{O(p^2k)}
\;\;
D_k}_{O(pk)} V_k^{\top}}_{O(pqk)} , 
\end{aligned}
$$
leading to a total complexity of 
$$
O(np^2+p^3) + O(nqk) + O(npk) + O(pk) + O(p^2k) + O(pqk).
$$

\section{Additional computational comparisons}\label{app:runtime}
In this section, we present additional simulations to assess runtimes and memory usage for different multiresponse regression methods as outlined in the main text. Below we report the wall clock computation times for fixed $q$ and fixed $p$ (Tables \ref{tab:runtimep} and \ref{tab:runtimeq}) and the memory consumption (Tables \ref{tab:memoryp} and \ref{tab:memoryq}), where $S1-S5$ indicates a scaling factor to increase the size of the regression problem by the dimension ($n$, $p$ or $q$) under consideration (see Section \ref{sec:compcost}).      

\begin{table}[!h]
\centering
\caption{Average runtime (sec) for regression methods described in the text for different dimension sizes. Left: Increasing $n$; Right: Increasing $p$.}
\begin{tabular}{lccccccccccc}
\toprule
& \multicolumn{5}{c}{Increasing $n$ runtime ($\times 10^{-3}$s)} & & \multicolumn{5}{c}{Increasing $p$ runtime ($\times 10^{-3}$s)} \\
\cmidrule{2-6} \cmidrule{8-12}
Method & S1 & S2 & S3 & S4 & S5 & & S1 & S2 & S3 & S4 & S5 \\
\midrule
OLS  & 0.15 & 0.24 & 0.27 & 0.44 & 0.59 & 
& 0.15 & 0.31 & 0.58 & 0.99 & 1.59 \\

RRR  & 0.27 & 0.47 & 0.58 & 0.79 & 0.91 &
& 0.25 & 0.41 & 0.75 & 1.12 & 1.69 \\

LRPS & 0.69 & 0.36 & 0.39 & 0.48 & 0.54 &
& 0.23 & 0.34 & 0.98 & 0.94 & 1.88 \\

PCR  & 0.46 & 0.71 & 0.87 & 1.42 & 1.25 &
& 0.35 & 0.76 & 1.76 & 1.34 & 1.94 \\
\bottomrule
\end{tabular}\label{tab:runtimep}
\end{table}

\begin{table}[!h]
\centering
\caption{Average runtime (sec) for regression methods described in the text for for different dimension sizes. Left: Increasing $n$; Right: Increasing $q$.}
\begin{tabular}{lccccccccccc}
\toprule
& \multicolumn{5}{c}{Increasing $n$ runtime ($\times 10^{-3}$s)} & & \multicolumn{5}{c}{Increasing $q$ runtime ($\times 10^{-3}$s)} \\
\cmidrule{2-6} \cmidrule{8-12}
Method & S1 & S2 & S3 & S4 & S5 & & S1 & S2 & S3 & S4 & S5 \\
\midrule
OLS  & 0.27 & 0.40 & 0.61 & 0.84 & 1.04 &
& 0.24 & 0.28 & 0.35 & 0.64 & 0.71 \\

RRR  & 0.70 & 1.06 & 1.59 & 2.14 & 2.71 &
& 0.62 & 1.33 & 2.11 & 3.17 & 4.46 \\

LRPS & 0.47 & 0.68 & 1.06 & 1.58 & 1.92 &
& 0.33 & 0.65 & 0.84 & 1.07 & 1.30 \\

PCR  & 0.80 & 0.51 & 0.62 & 0.60 & 0.83 &
& 0.37 & 0.37 & 0.29 & 0.33 & 0.39 \\
\bottomrule
\end{tabular}\label{tab:runtimeq}
\end{table}

\begin{table}[!h]
\centering
\caption{Average memory usage (MB) for multiresponse regression methods described in the text for different dimension sizes. Left: Increasing $n$; Right: Increasing $p$.}
\begin{tabular}{lccccccccccc}
\toprule
& \multicolumn{5}{c}{Increasing $n$ memory (kB)} & & \multicolumn{5}{c}{Increasing $p$ memory (kB)} \\
\cmidrule{2-6} \cmidrule{8-12}
Method & S1 & S2 & S3 & S4 & S5 & & S1 & S2 & S3 & S4 & S5 \\
\midrule
OLS  & 1.28 & 0.70 & 0.70 & 0.70 & 0.70 &
& 0.72 & 0.72 & 0.72 & 0.72 & 0.72 \\

RRR  & 0.70 & 0.70 & 0.70 & 0.70 & 0.70 &
& 0.70 & 0.70 & 0.70 & 0.70 & 0.70 \\

LRPS & 0.70 & 0.70 & 0.70 & 0.70 & 0.70 &
& 0.70 & 0.70 & 0.70 & 0.70 & 0.70 \\

PCR  & 0.70 & 0.70 & 0.70 & 0.70 & 0.70 &
& 0.70 & 0.70 & 0.70 & 0.70 & 0.70 \\
\bottomrule
\end{tabular}\label{tab:memoryp}
\end{table}

\begin{table}[!h]
\centering
\caption{Average memory usage (MB) for multiresponse regression methods described in the text for different dimension sizes. Left: Increasing $n$; Right: Increasing $q$.}
\begin{tabular}{lccccccccccc}
\toprule
& \multicolumn{5}{c}{Increasing $n$ memory (kB)} & & \multicolumn{5}{c}{Increasing $q$ memory (kB)} \\
\cmidrule{2-6} \cmidrule{8-12}
Method & S1 & S2 & S3 & S4 & S5 & & S1 & S2 & S3 & S4 & S5 \\
\midrule
OLS  & 0.93 & 0.70 & 0.70 & 0.70 & 0.70 &
& 0.93 & 0.93 & 0.93 & 0.93 & 0.93 \\

RRR  & 0.70 & 0.70 & 0.70 & 0.70 & 0.70 &
& 0.70 & 0.70 & 0.70 & 0.70 & 0.70 \\

LRPS & 0.70 & 0.70 & 0.70 & 0.70 & 0.70 &
& 0.70 & 0.70 & 0.70 & 0.70 & 0.70 \\

PCR  & 0.70 & 0.70 & 0.70 & 0.70 & 0.70 &
& 0.70 & 0.70 & 0.70 & 0.70 & 0.70 \\
\bottomrule
\end{tabular}\label{tab:memoryq}
\end{table}

\end{document}